\documentclass[10pt,a4paper]{article}
\usepackage{amsmath, amssymb, color}
\usepackage{graphicx,eurosym} 

\usepackage[T1]{fontenc}
\usepackage[utf8]{inputenc}
\usepackage{lmodern}
\usepackage[english]{babel}

\usepackage{fancyhdr}
\usepackage[table, svgnames, dvipsnames]{xcolor}
\usepackage[title]{appendix}

\usepackage[normalem]{ulem} 
\usepackage{hyperref}
\hypersetup{hidelinks}
\usepackage{tabularx}
\usepackage{booktabs}
\usepackage{amssymb}
\usepackage{amsmath,mathrsfs}
\usepackage{amsthm} 
\usepackage{graphicx}
\usepackage{empheq}
\usepackage{color}
\usepackage{empheq}
\usepackage{float}
\usepackage{caption}
\usepackage{subcaption}
\usepackage{xcolor}
\usepackage{adjustbox}
\usepackage{listings}
\usepackage{algorithm}
\usepackage{algpseudocode}
\usepackage{bbm}
\usepackage{enumitem}

\usepackage{algorithm}
\usepackage{algpseudocode}
\algdef{SE}[SUBALG]{Indent}{EndIndent}{}{\algorithmicend\ }%
\algtext*{Indent}
\algtext*{EndIndent}

\definecolor{gris}{gray}{0.8}
\definecolor{grissombre}{gray}{0.4}
\definecolor{grisclair}{gray}{0.7}

\definecolor{bleu}{RGB}{41, 128, 185}
\definecolor{rouge}{RGB}{192, 57, 43}
\definecolor{vert}{RGB}{39, 174, 96}

\theoremstyle{plain}
\newtheorem{theorem}{Theorem}[section]
\newtheorem{proposition}[theorem]{Proposition}

\theoremstyle{definition}

\newtheorem{assumption}{Assumption}
\newtheorem*{sbjts}{SBJTS problem}

\theoremstyle{remark}
\newtheorem{remark}{Remark}[section]

\numberwithin{equation}{section}
\numberwithin{figure}{section}
\numberwithin{table}{section}

\def \R{\mathbb{R}}

\def \E{\mathbb{E}}

\def \P{\mathbb{P}}

\def \d{\mathrm{d}}
\def \trans{^{\scriptscriptstyle{\intercal}}}

\usepackage{todonotes}

\title{
Schrödinger bridges with jumps for time series generation
\thanks{This work is supported by the Chair Deep Learning and Statistics Qube RT, 
the BNP-PAR Chair ``Futures of Quantitative Finance", the Chair ``Risques Financiers", by FiME, Laboratoire de Finance des March\'es de l'Energie, and the ``Finance and Sustainable Development'' EDF - CACIB Chair. }
}
\date{\today}

\author{
Stefano De Marco
\thanks{CMAP, École Polytechnique, Institut Polytechnique de Paris, Email: stefano.de-marco@polytechnique.edu}
\quad 
Huyên Pham
\thanks{CMAP, École Polytechnique, Institut Polytechnique de Paris, Email: huyen.pham@polytechnique.edu}
\quad 
Davide Zanni
\thanks{CMAP, École Polytechnique, Institut Polytechnique de Paris, Email: davide.zanni@polytechnique.edu} 
}

\begin{document}

\maketitle

\begin{abstract}
\noindent We study generative modeling for time series using entropic optimal transport and the Schrödinger bridge (SB) framework, with a focus on applications in finance and energy modeling. 
Extending the diffusion-based approach of Hamdouche, Henry-Labordère, Pham, \cite{pham_generative}, we introduce a jump–diffusion Schrödinger bridge model that allows for discontinuities in the generative dynamics. 
Starting from a Schrödinger bridge entropy minimization problem, we reformulate the task as a stochastic control problem whose solution characterizes the optimal controlled jump–diffusion process. When sampled on a fixed time grid, this process generates synthetic time series matching the joint distributions of the observed data.

The model is fully data-driven, as both the drift and the jump intensity are learned directly from the data. We propose practical algorithms for training, sampling, and hyperparameter calibration. Numerical experiments on simulated and real datasets, including financial and energy time series, show that incorporating jumps substantially improves the realism of the generated data, in particular by capturing abrupt movements, heavy tails, and regime changes that diffusion-only models fail to reproduce. Comparisons with state-of-the-art generative models highlight the benefits and limitations of the proposed approach.
\end{abstract}

\vspace{3mm}

\noindent\textbf{Keywords:} 
Schr\"odinger bridge; entropic optimal transport; jump processes; stochastic control; L\'evy-driven dynamics; time series generation. 
\newline
\noindent\textbf{Mathematics Subject Classification:} 91G80, 49Q22.


\section{Introduction}

\paragraph{Generative modeling task.} Generative models have gained increasing attention in machine learning and applied probability. The task of learning the underlying data distribution to synthesize realistic and novel samples has become crucial across diverse domains, such as the generation of high-fidelity images, coherent text, complex videos or molecular structures. Contemporary architectures generally fall into three categories: flow-matching models \cite{lipman2022flow}, based on learning a velocity field to find a deterministic flow which moves a source distribution (usually Gaussian) to the target distribution; denoising diffusion models \cite{ho2020denoising, song2021scorebased}, built through the reversion of a Markov chain or a diffusion SDE, often relying on learning the score function of the data distribution; Generative Adversarial Networks (GANs) \cite{goodfellow2014generative}, written as a zero-sum game between a generator and a discriminator. In quantitative finance the realistic simulation of financial time series has attracted increasing interest in risk management, portfolio construction, stress testing, and scenario generation. Traditional parametric models frequently fail to capture the complex and empirical stylized facts of financial data, thereby motivating the shift toward data-driven generative approaches. However, faithfully reproducing these temporal dependencies remains a significant challenge, and universally accepted evaluation metrics for synthetic financial data are still missing. See \cite{cetingoz2025synthetic} for a complete overview.

In this work, we focus on generative modeling for time series based on optimal transport (OT) techniques. OT-based approaches seek to identify a transport mechanism that optimally moves a reference distribution into the empirical data distribution, minimizing a transport cost while preserving the structure of the data. This framework enables generative modeling by directly learning an optimal transport map or coupling, bypassing the need for adversarial training or the inversion of complex stochastic dynamics. Early explorations of this idea include training generative models by minimizing OT-based metrics like Sinkhorn divergences \cite{genevay2018learning}. Subsequent works have aimed to directly learn the optimal transport maps, such as through continuous-time Neural ODE flows \cite{onken2021ot} or neural approximations of the Monge map \cite{korotin2022neural}. Specifically for time series, causal optimal transport has been introduced to specifically reproduce temporal dependencies and causal structures rather than merely matching static marginal distributions \cite{xu2020cot}. Our work extends this domain by proposing a novel generative model built upon a specific, powerful optimal transport technique: the Schrödinger bridge (SB) problem.

\paragraph{The Schrödinger bridge problem.} The Schrödinger bridge problem, first introduced by E.\ Schrödinger in the 1930s, is an entropy minimization problem that seeks to find the probability measure that is "as close as possible" to another reference probability, while satisfying prescribed marginal distributions. Formally, consider a reference path measure $\mathbb{Q}$ on the space $\Omega=C([0,1], \mathbb{R}^d)$, typically fixed as the Wiener measure, under which the canonical process $X=(X_t)_{t\geq 0}$ is a Brownian motion. The \textit{dynamic Schrödinger bridge problem} is then formulated as
\begin{equation*}
    \inf_{\mathbb{P}\in \mathcal{P}(\Omega): \,P_0=\mu_0,\, P_1=\mu_1} H( \mathbb{P} | \mathbb{Q}),
\end{equation*}
where $H(\mathbb{P} | \mathbb{Q})$ is the relative entropy (Kullback-Leibler divergence) between the two path measures, $P_0=\mathbb{P}\circ X_0^{-1}$, $P_1=\mathbb{P}\circ X_1^{-1}$, and $\mu_0$ and $\mu_1$ are the prescribed initial and terminal marginal distributions belonging to $\mathcal{P}(\mathbb{R}^d)$. The solution $\mathbb{P}^*$ is the probability law of a unique, constrained stochastic process that transports the initial distribution $\mu_0$ to the target distribution $\mu_1$ in an entropy-optimal way relative to the reference measure $\mathbb{Q}$. The \textit{static} counterpart of the problem is defined on the space of probabilities $\mathcal{P}(\mathbb{R}^d)$ as
\begin{equation*}
    \inf_{\pi\in \Pi(\mu_0,\mu_1)} H(\pi | Q_{01})
\end{equation*}
where $Q_{01}(\d x\,\d y)=\mathbb{Q}((X_0,X_1)\in \d x\,\d y)$ is the joint law of the initial and final position of the reference process, and $\Pi(\mu_0,\mu_1)$ denotes the set of couplings with marginals $\mu_0$ and $\mu_1$. The classical Schrödinger bridge problem has been extensively studied: we refer to Léonard \cite{leonard2013survey} for a comprehensive overview of the Schrödinger bridge problem, including classical solution methods, dual formulation, and the characterization of the solution in terms of Schrödinger potentials. In this paper, we are interested in the application of the Schrödinger bridge problem to financial modeling and its extension to processes with discontinuities. This extension has been recently explored in depth in \cite{zlotchevski2024schr, zlotchevski2025jump}, where the authors define the minimization problem over the space of path measures on càdlàg trajectories, choosing as reference measure the law of a Lévy-Itô process. They provide the explicit characterization of the solution via the associated Schrödinger system and its potentials.

An important connection is given by the interpretation of the Schrödinger bridge problem as entropic optimal transport (EOT) problem, which can be viewed as a computationally tractable relaxation of classical optimal transport. Indeed, consider the static Schrödinger bridge problem with a reference measure $R\in \mathcal{P}(\mathbb{R}^d \times\mathbb{R}^d)$ of the form
\begin{equation*}
    \d R(x,y) \propto \exp\left(-\frac{1}{\varepsilon} c(x,y)\right)\d(\mu_0\otimes \mu_1)(x,y)
\end{equation*}
Then minimizing the relative entropy with respect to $R$ over couplings $\Pi(\mu_0,\mu_1)$ is equivalent, up to an additive constant, to the problem
\begin{equation*}
    \inf_{\pi\in \Pi(\mu_0,\mu_1)} \int c(x,y) \pi(\d x, \d y) + \varepsilon H(\pi | \mu_0\otimes \mu_1)
\end{equation*}
where $c$ is a given cost function and $H$ the Kullback-Leibler divergence. This formulation, combining a transport cost with an entropic regularization term, is precisely the EOT problem. This formulation, popularized by the work of Cuturi \cite{cuturi2013sinkhorn}, provides the foundation for the well-known Sinkhorn algorithm, which allows to numerically compute the entropy-regularized optimal coupling. By leveraging this connection, various numerical methods for approximating the SB solution have been developed: the Iterative Proportional Fitting (IPF) \cite{nutz2021introduction}, that it is based on the Sinkhorn algorithm, the Iterative Markovian Fitting (IMF) \cite{shi2023diffusion}, and faster techniques like the Light SB algorithm \cite{gushchin2024light}. These numerical algorithms are essential for turning the Schrödinger bridge problem theory into a practical tool for generative modeling.

\paragraph{Generative models based on Schrödinger bridges.} The Schrödinger bridge technology provides a powerful foundation for the  design of generative models. De Bortoli et al.\ \cite{de2021diffusion} was among the first to establish a rigorous connection between the SB problem and score-based generative models, introducing the Diffusion SB as a novel implementation of the IPF algorithm using score-based diffusion techniques. This approach enables the construction of bridges between arbitrary distributions over a finite time horizon, and leads to improved sampling efficiency compared with  classical score-based methods. Shi et al.\ \cite{shi2023diffusion} further developed this line of work by proposing Diffusion Schrödinger Bridge Matching, an algorithm which refines the practical approximation of Schrödinger bridges and offers a more scalable and robust alternative to IPF in high-dimensional settings.

In the context of time series, the Schrödinger bridge problem naturally induces a generative model on the path space. The resulting optimal stochastic process interpolates the joint distribution at fixed dates and, when sampled, produces synthetic time series whose joint finite-dimensional distribution matches those of the observed data. This dynamic optimal transport perspective combines the interpretability of OT  with the flexibility of stochastic modeling, making it particularly well suited for generating realistic time series while preserving both distributional and temporal consistency. In financial applications, Labordère \cite{henry2019martingale} introduced a martingale formulation of the Schrödinger bridge problem within an entropic optimal transport framework, incorporating martingale constraints that are essential for asset price modeling. Building on this idea, Hamdouche et al.\ \cite{pham_generative} rigorously defined and implemented a numerical algorithm to reproduce the joint distribution of financial time series using Schrödinger bridges.

In this paper, we extend the work of Hamdouche et al.\ \cite{pham_generative} to the setting of jump-diffusion processes. Extensions of generative models to dynamics with jumps have been explored, notably by generalizing diffusion-based stochastic differential equations to Lévy-driven dynamics \cite{yoon2023score, baule2025generative}. Such extensions are crucial for capturing heavy-tailed distributions and abrupt movements commonly observed in financial and energy time series. We adopt a similar strategy within the Schrödinger bridge framework by moving beyond the standard Wiener reference and instead considering the law of a process composed of a Brownian motion and a compound Poisson process. This leads to a Schrödinger bridge problem defined on the space of càdlàg paths, and requires new theoretical results for the SB problem to discontinuous dynamics. Beyond its theoretical interest, this extension is strongly motivated by practical considerations: our numerical experiments demonstrate that incorporating jumps substantially improves the realism of the generated time series compared with diffusion-only Schr\"odinger bridge models, particularly in terms of tail behavior, abrupt variations, and regime changes.

\paragraph{Our contributions and organization of the paper.} This paper makes three main contributions:
\begin{itemize}
\item  First, we introduce a new Schr\"odinger bridge–based generative model for time series driven by jump–diffusion dynamics. Extending the framework of Hamdouche et al.\ \cite{pham_generative}, we formulate the Schr\"odinger bridge problem on the space of probability laws of càdlàg processes, allowing the reference measure to be the law of a jump–diffusion process rather than a purely diffusive one. 
\item Second, we propose a systematic calibration procedure for identifying the key hyperparameters of the Schrödinger bridge generative model, which is essential to ensure numerical stability and robustness across different financial datasets. 
\item Third, through extensive numerical experiments on both financial and energy time series, we show that incorporating jumps leads to a substantial improvement in generative performance compared with diffusion-only Schr\"odinger bridge models, notably by better capturing heavy-tailed behavior, abrupt variations, and regime changes observed in real data. 
\end{itemize} 
The paper is organized as follows: in Section \ref{section_formulation}, we formulate the Schrödinger bridge problem for time series, and extend it to the setting of jump-diffusion processes.  Section \ref{section_main_theorem} presents the main theoretical result: Theorem \ref{main_theorem} characterizes the solution of the problem and explicitly identify the optimal controlled dynamics of the generative process used to simulate synthetic time series on a fixed time grid.  In Section \ref{section_estimators}, we develop the methodology for estimating the drift and jump intensity of the optimal dynamics, with particular attention to the case where the reference jump component is Gaussian.  Section \ref{section_simulation_schemes} introduces two simulation schemes designed for sampling trajectories of the optimal jump-diffusion process, and Section \ref{section_calibration} describes the calibration procedure used to select the model  hyperparameters. In Section \ref{section_numerical_test}, we report numerical experiments on both simulated and real-world datasets, including financial and energy time series, presenting qualitative simulations, quantitative performance  metrics, and comparisons with state-of-the-art generative models. Finally, Appendix \ref{app_A} recalls some notions on the predictability of stochastic processes, and Appendix \ref{app_B} reports some additional numerical tests.

\section{Setting and problem formulation} \label{section_formulation}
Let $d, N\in\mathbb{N}$, and let $\mu$ denote a probability distribution on the space $(\R^d)^N$. Assume that the distribution $\mu$ is only accessible through a finite set of samples $x^{\mathrm{(1)}}, \ldots, x^{\mathrm{(M)}}$, each observed on a discrete time grid $\mathcal{T}:=\{t_i\}_{i=1,\ldots,N}$, where we set $T:=t_N$ as the terminal observation horizon. Our aim is to develop a method that enables the generation of new samples from this unknown target distribution $\mu$. 

Consider the space $\Omega = D([0,T];\mathbb{R}^d)$ of the càdlàg functions $\omega : [0,T]\to \mathbb{R}^d$, endowed with the Skorokhod topology, and denote by $\mathcal{F}$ the associated Borel $\sigma$-algebra. Define the canonical process $X_t(\omega)=\omega(t)$, for $t\in [0,T]$, $\omega\in\Omega$, with càdlàg paths and $X_0 = 0$. Then the canonical filtration $\mathbb{F}=(\mathcal{F}_t)_t$ is defined by $\mathcal{F}_t = \sigma(X_s, 0\leq s\leq t)$, that is, the smallest $\sigma$-algebra making the coordinate maps up to time $t$ measurable. Denoting by $\mathcal{P}(\Omega)$ the space of probability measures on $\Omega$, consider the reference probability $\mathbb{P}^0 \in \mathcal{P}(\Omega)$ being the law of the process defined by
\begin{equation} \label{dynamics_x_P0}
    X_t = X_0 + \sigma \, W^0_t + \int_{(0,t]\times\mathbb{R}^d} z\,N(\d t,\d z), \quad t\in [0,T],
\end{equation}
with $X_0 = 0$, where $W^0$ is a $d$-dimensional Brownian motion, $\sigma \in \mathbb{R}^{d\times d}$ is an invertible matrix, and $ N(\d t,\d z)$ is a Poisson random measure on $[0,T]\times \mathbb{R}^d$, independent of $W^0$, with intensity measure 
\begin{equation*}
    \lambda^0 \nu^0(\d z) \d t
\end{equation*}
where $\lambda^0 >0$ is a constant jump intensity and $\nu^0 \in \mathcal{P}(\mathbb{R}^d)$ is the jump size distribution satisfying  $\nu^0(\{0\})=0$; in particular, the jump component has finite activity. Throughout the paper we work under the standing assumption that $\sigma$ is non-degenerate and $\lambda^0 \neq 0$, unless explicitly stated otherwise. Our goal is to solve the following Schrödinger bridge problem with jumps for time series, which will be denoted by SBJTS. 

\begin{sbjts}
Find $\mathbb{P}^*\in \mathcal{P}(\Omega)$ solution to
\begin{equation*}
    \mathbb{P}^* \in \text{arg} \min_{\mathbb{P}\in\mathcal{P}^\mu_\mathcal{T}(\Omega)} H(\mathbb{P}|\mathbb{P}^0)
\end{equation*}
where $\mathcal{P}^\mu_\mathcal{T}(\Omega) = \{\mathbb{P} \in \mathcal{P}(\Omega) : \mathbb{P} \circ (X_{t_1}, \ldots, X_{t_N})^{-1} = \mu \}$, and $H(\cdot|\cdot)$ is the Kullback-Leibler divergence between two probability measures defined by 
\begin{equation*}
    H(\mathbb{P}|\mathbb{P}^0) = 
    \begin{cases}
        \E^{\P}\left[\ln \frac{\d\P}{\d\P^0}\right] &\text{if } \P \ll \P^0 ,\\
        + \infty &\text{otherwise.}
    \end{cases}
\end{equation*}
\end{sbjts}

\noindent We remark that the structure of our problem coincides with the one introduced in \cite{pham_generative}, except that we include an additional compound Poisson term in the dynamics of the process $X$ under the reference measure $\mathbb{P}^0$. This modification provides a natural extension of the original framework, allowing for the presence of jumps. 

To solve the SBJTS problem, we consider a stochastic control formulation, following the classical approach of \cite{dai1991stochastic}. Indeed, given $\P\in\mathcal{P}(\Omega)$ with finite relative entropy $H(\P|\P^0)<\infty$, by Girsanov's theorem (see \cite{léonard, jacod2013limit}) we can associate to $\P$ an $\R^d$ valued adapted process $\alpha=(\alpha_t)_{t\in[0,T]}$ and an $\R_+$ valued predictable process $\lambda = (\lambda_t)_{t\in[0,T]}$ with finite energy 
\begin{equation} \label{finite_energy_ass}
    \E^{\P}\left[\frac{1}{2}\int_0^T \|\sigma^{-1} \alpha_t\|^2 \d t +\int_{(0,T]\times\mathbb{R}^d} \left( \lambda_t(z)\ln\left(\frac{\lambda_t(z)}{\lambda^0}\right) + \lambda^0 - \lambda_t(z) \right)\nu^0(\d z)\d t \right] < \infty,    
\end{equation}
such that the Radon–Nikodym derivative corresponding to this change of measure is given by 
\begin{multline}\label{Z_T}
    \frac{\d\P}{\d\P^0} = \exp\Bigg(\int_0^T (\sigma^{-1} \alpha_t)\trans \, \d W^0_t - \frac{1}{2}\int_0^T \|\sigma^{-1}\alpha_t\|^2 \d t \\+ \int_{(0,T]\times\mathbb{R}^d} \ln\left(\frac{\lambda_t(z)}{\lambda^0}\right) N(\d t,\d z) 
    - \int_{(0,T]\times\mathbb{R}^d} (\lambda_t(z)-\lambda^0) \nu^0(\d z)\d t\Bigg).
\end{multline}
Moreover, the process 
\begin{equation}\label{new BM}
    W^0_t - \int_0^t \sigma^{-1}\alpha_s \d s, \quad t\in [0,T]
\end{equation}
is a $\P$-Brownian motion, that we will denote by $W_t$, and the process
\begin{equation}\label{new N}
    \int_{(0,t]\times\mathbb{R}^d} \ln\left( \frac{\lambda_s(z)}{\lambda^0}\right) N(\d s,\d z) - \int_{(0,t]\times\mathbb{R}^d} \lambda_s(z)\ln\left( \frac{\lambda_s(z)}{\lambda^0}\right) \nu^0(\d z)\d s, \quad t\in [0,T] 
\end{equation}
is a $\P$-martingale. Let $J(\alpha,\lambda)$ be the functional
\begin{equation*}
    J(\alpha,\lambda) = \mathbb{E}^\mathbb{P} \left[ \frac{1}{2}\int_0^T \|\sigma^{-1}\alpha_t\|^2 \d t + \int_{(0,T]\times\mathbb{R}^d} \left(\lambda_t(z)\ln \left(\frac{\lambda_t(z)}{\lambda^0}\right) - \lambda_t(z) +\lambda^0\right) \nu^0(\d z)\d t \right].
\end{equation*}
Then substituting \eqref{Z_T} into the expression of the Kullback-Leibler divergence of $\mathbb{P}$ with respect to $\P^0$, and using that \eqref{new BM} and \eqref{new N} are $\mathbb{P}$-martingales, we get
\begin{equation*}
    H(\mathbb{P}|\P^0) = J(\alpha,\lambda).
\end{equation*}
Hence the SBJTS problem is equivalent to the following formulation
\begin{equation*}
    (\textbf{P})\,\begin{cases}
        \text{minimize over $(\alpha,\lambda) \in \mathcal{A}$ the functional $J(\alpha,\lambda)$} \\
        \text{subject to: } X_t = X_0 + \int_0^t \alpha_s\,\d s + \sigma \, W_t + \int_{(0,t]\times\mathbb{R}^d} zN(\d s,\d z),\, X_0=0 \\
        \quad \quad \quad\quad \quad \mathbb{P}\circ(X_{t_1},\ldots, X_{t_N})^{-1}=\mu \\
    \end{cases}
\end{equation*}
where 
\begin{equation*}
    \mathcal{A} = \left\{ (\alpha, \lambda) : \alpha: \Omega\times [0,T] \to \mathbb{R}^d \,\text{ adapted, } \lambda: \Omega\times [0,T] \times \mathbb{R}^d \to \mathbb{R}^+ \,\text{ predictable, } J(\alpha,\lambda)<\infty \right\},
\end{equation*}
$W$ is $\mathbb{P}$-Brownian motion and the Poisson measure $N$ has intensity $\lambda_t(z)\nu^0(\d z)\d t$ under $\P$. We denote by $V_{SBJTS}$ the infimum of this stochastic control problem:
\begin{equation*}
    V_{SBJTS} = \inf_{\alpha,\lambda \in \mathcal{A}^\mu_\mathcal{T}} J(\alpha,\lambda)  
\end{equation*}
where $\mathcal{A}^\mu_\mathcal{T}$ is the set of controls $(\alpha,\lambda) \in\mathcal{A}$ such that $\mathbb{P}\circ(X_{t_1},\ldots, X_{t_N})^{-1}=\mu$ and $\d X_t = \alpha_t\,\d t + \sigma \,\d W_t + \int_{\mathbb{R}^d} zN(\d t,\d z),$ for $t\leq T$, $X_0=0$.

Our goal is to prove the existence of an optimal pair $(\alpha^*, \lambda^*) \in \mathcal{A}^\mu_\mathcal{T}$ that can be explicitly derived. Once identified, these optimal drift and intensity can be used to simulate the corresponding controlled jump-diffusion process $X$, thereby generating samples from the time series distribution $\mu$ under the probability measure $\P^*$ on $\Omega$.

\section{Solution of the Schrödinger bridge problem with jumps for time series} \label{section_main_theorem}
Let $\mu^0_\mathcal{T} = \mathbb{P}^0 \circ(X_{t_1},\ldots, X_{t_N})^{-1}$ be the distribution of the time series under the reference measure $\mathbb{P}^0$. We introduce the following assumptions on the target distribution $\mu$ and $\mu^0_\mathcal{T}$.

\begin{assumption} \label{assumption_1}
    Assume that $H(\mathbb{\mu}|\mathbb{\mu}^0_\mathcal{T}) < \infty$. 
\end{assumption}

\noindent Observe that Assumption \ref{assumption_1} implies that $\mu \ll \mu^0_\mathcal{T}$. This condition is sufficient to guarantee that the SBJTS problem is well posed and that a minimizer exists. However, our objective is not only to establish the existence, but also to characterize the induced dynamics of the process $X$ under the optimal solution. For this purpose, we introduce an additional assumption, which plays a crucial role in the proof of Theorem \ref{main_theorem}.

\begin{assumption}\label{assumption_2}
    Assume that $\mu \sim \mu^0_\mathcal{T}$. 
\end{assumption}

\noindent We denote by $\frac{\d\mu}{\d\mu^{0}_\mathcal{T}}$ the density of $\mu$ with respect to $\mu^{0}_\mathcal{T}$. Moreover, we set $\mathbf{X}_{t_i}:=(X_{t_1}, \ldots, X_{t_i})$, $\mathbf{x}_i := (x_1, \ldots, x_i)$ for all $(x_1, \ldots, x_N) \in (\mathbb{R}^d)^N$, $t_0 = 0$, and we denote by $X_{t^-}$ the left limit of the process $X$ in $t$. For $i=0,\ldots,N-1$, $t\in [t_i,t_{i+1}]$, $x_1, \ldots, x_i, x\in \mathbb{R}^d$ consider the function
\begin{equation} \label{h_i}
    h_i(t, x; \mathbf{x}_i) = \mathbb{E}^{\P^0}\left[ \frac{\d\mu}{\d\mu^0_\mathcal{T}}\left(X_{t_1}, \ldots, X_{t_N} \right) \; \big\vert \; \mathbf{X}_{t_i} = \mathbf{x}_i, \ X_{t} = x \right].
\end{equation}
The solution of the SBJTS problem is provided in the following theorem. 

\begin{theorem} \label{main_theorem}
The probability measure $\mathbb{P}^* \in \mathcal{P}(\Omega)$ defined by 
\begin{equation}\label{def_solution_SB}
    \frac{\d\mathbb{P}^*}{\d\P^0} = \frac{\d\mu}{\d\mu^{0}_\mathcal{T}}(X_{t_1}, \ldots, X_{t_N}),
\end{equation}
solves the SBJTS problem and it is the law of the optimal process 
\begin{equation} \label{SDE_under P*}
    X_t = X_0 + \int_0^t \alpha^*_s \d s + \sigma \, W_t +\int_{(0,t]\times\R^d} z N(\d t,\d z), \quad t\in [0,T],
\end{equation}
with $X_0=0$, where $(W_t)_{t\in[0,T]}$ is a $\mathbb{P}^*$-Brownian motion, and $N$ is a Poisson random measure with intensity measure $\lambda^*_t(z)\nu^0(dz)dt$ under $\mathbb{P}^*$. 
For $i=0,\ldots,N-1$, $t\in (t_i,t_{i+1}]$, the drift $\alpha^*_t$ and the intensity $\lambda^*_t(z)$ are given by 
\begin{align}
    \alpha^*_t &= a(t, X_t; \mathbf{X}_{t_i}), \label{def_of_drift} \\
    \lambda^*_t(z) &= \Lambda(t,X_{t^-},z;\mathbf{X}_{t_i}), \label{def_of_int}
\end{align} 
where, for $x_1, \ldots, x_i, x\in \mathbb{R}^d$, and $z\in\R^d$, we set 
\begin{align*}
    a(t,x;\mathbf{x}_i) &= \sigma \sigma\trans \nabla_x \ln h_i(t, x; \mathbf{x}_i), \\
    \Lambda(t,x,z;\mathbf{x}_i) &= \lambda^0 \frac{h_i(t,x+z;\mathbf{x}_i)}{h_i(t,x;\mathbf{x}_i)}. 
\end{align*}
The pair $(\alpha^*, \lambda^*)\in \mathcal{A}$ achieves the minimum of the problem (\textbf{P}), which satisfies
\begin{equation*}
    V_{SBJTS} = H(\P^* \vert \P^0) = H(\mu \vert \mu^0_\mathcal{T}).
\end{equation*}
\end{theorem}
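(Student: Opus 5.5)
The proof has two parts: optimality of $\P^*$ through the disintegration of relative entropy, and identification of the dynamics under $\P^*$ through the martingale $Z_t=\E^{\P^0}[\frac{\d\P^*}{\d\P^0}\mid\mathcal F_t]$ combined with Itô's formula and Girsanov's theorem for jump processes.

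\textbf{Optimality.} Write $Y=(X_{t_1},\ldots,X_{t_N})$. For any $\P\in\mathcal P^\mu_\mathcal T(\Omega)$ with $H(\P|\P^0)<\infty$, the additivity of relative entropy under disintegration gives
\[
H(\P|\P^0)=H(\mu|\mu^0_\mathcal T)+\int H\big(\P(\cdot|Y=y)\,\big|\,\P^0(\cdot|Y=y)\big)\,\mu(\d y)\ \ge\ H(\mu|\mu^0_\mathcal T).
\]
The measure $\P^*$ defined in \eqref{def_solution_SB} has the prescribed marginal $\P^*\circ Y^{-1}=\mu$, because $\frac{\d\mu}{\d\mu^0_\mathcal T}\circ Y$ is $\sigma(Y)$-measurable. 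For the same reason its conditional laws given $Y$ coincide with those of $\P^0$, so the integral term vanishes. Hence $H(\P^*|\P^0)=\E^{\mu}[\ln\frac{\d\mu}{\d\mu^0_\mathcal T}]=H(\mu|\mu^0_\mathcal T)$, which is finite by Assumption \ref{assumption_1}. Since $H(\P|\P^0)=J(\alpha,\lambda)$ for the controls associated with $\P$, this also yields $V_{SBJTS}=H(\mu|\mu^0_\mathcal T)$, once we show that the controls of $\P^*$ are $(\alpha^*,\lambda^*)$.

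\textbf{Density process.} Under $\P^0$, $X$ has independent increments. By the Markov property, for $t\in[t_i,t_{i+1}]$ we get $Z_t=h_i(t,X_t;\mathbf X_{t_i})$: the conditional expectation given $\mathcal F_t$ depends only on $\mathbf X_{t_i}$ and $X_t$. Explicitly, $h_i$ is the integral of the density against the product of the transition kernels $p_{t_{i+1}-t}(x,\cdot)\,p_{t_{i+2}-t_{i+1}}\cdots$. Each kernel is a Gaussian convolved with a compound Poisson law, with $\sigma$ invertible, so it is smooth in $(t,x)$ for $t<t_{i+1}$. This makes $h_i$ a $C^{1,2}$ solution on $[t_i,t_{i+1})$ of the backward Kolmogorov equation
\[
\partial_t h+\tfrac12\mathrm{tr}(\sigma\sigma\trans\nabla^2 h)+\lambda^0\int\big(h(x+z)-h(x)\big)\nu^0(\d z)=0 .
\]
Assumption \ref{assumption_2} gives $\frac{\d\mu}{\d\mu^0_\mathcal T}>0$ $\mu^0_\mathcal T$-a.e., hence $h_i>0$, so $\ln h_i$ and the ratio $h_i(t,x+z)/h_i(t,x)$ are well defined. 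Applying Itô's formula on each interval, the drift cancels by the Kolmogorov equation and
\[
\d Z_t=Z_{t^-}\Big[(\nabla_x\ln h_i)\trans\sigma\,\d W^0_t+\int\Big(\tfrac{h_i(t,X_{t^-}+z)}{h_i(t,X_{t^-})}-1\Big)\tilde N^0(\d t,\d z)\Big].
\]
At the grid points $t_{i+1}$ there is no jump of $Z$ (a.s. no jump of $X$ occurs at a fixed time), and $h_i(t_{i+1},x;\mathbf x_i)=h_{i+1}(t_{i+1},x;(\mathbf x_i,x))$, so these pieces concatenate into one global martingale. Comparing $Z$ with the stochastic exponential \eqref{Z_T} identifies $\sigma^{-1}\alpha^*=\sigma\trans\nabla\ln h_i$, that is $\alpha^*=\sigma\sigma\trans\nabla\ln h_i$, and $\lambda^*=\lambda^0 h_i(x+z)/h_i(x)$. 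Girsanov's theorem then gives the dynamics \eqref{SDE_under P*}, and finiteness of $J(\alpha^*,\lambda^*)=H(\P^*|\P^0)$ shows $(\alpha^*,\lambda^*)\in\mathcal A$.

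\textbf{Main obstacle.} The hardest step is the regularity near the endpoints $t\uparrow t_{i+1}$. There $h_i(t,\cdot)$ converges to a merely measurable function, so Itô's formula only applies on $[t_i,t_{i+1}-\varepsilon]$. I would work on these truncated intervals, where smoothness holds, and pass to the limit using the uniform integrability of the martingale $Z$ (it is a closed martingale, bounded in $L^1$) together with the finite-entropy bound from the first step. That bound controls $\E^{\P^*}\int\|\sigma^{-1}\alpha^*\|^2\d t$ and the jump entropy term, so the stochastic integrals extend up to $t_{i+1}$.
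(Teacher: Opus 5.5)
Your proof is correct. Its dynamical part (Markov property giving $Z_t=h_i(t,X_t;\mathbf{X}_{t_i})$, Itô's formula, identification of $Z$ as a Doléans-Dade exponential, Girsanov) is the paper's Steps 2--3. You cancel the drift via the backward Kolmogorov equation, while the paper notes that the finite-variation part of the martingale $Z$ must vanish; these are equivalent.

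You genuinely differ on optimality. The paper takes a competitor $\P$ with controls $(\alpha,\lambda)$. Via the Girsanov density \eqref{Z_T}, it rewrites $1=\E^{\P^0}[\frac{\d\mu}{\d\mu^0_{\mathcal{T}}}(X_{t_1},\ldots,X_{t_N})]$ as a $\P$-expectation of $\exp\bigl(\ln\frac{\d\mu}{\d\mu^0_{\mathcal{T}}}-\ln\frac{\d\P}{\d\P^0}\bigr)$. It then applies Jensen, using that the $\d W^{\P}$ and compensated-jump integrals have zero $\P$-mean, to get $J(\alpha,\lambda)\ge H(\mu|\mu^0_{\mathcal{T}})$ directly inside (\textbf{P}).

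You instead get the lower bound $H(\P|\P^0)\ge H(\mu|\mu^0_{\mathcal{T}})$ from the chain rule for relative entropy under disintegration along $(X_{t_1},\ldots,X_{t_N})$ on the Polish space $D([0,T];\R^d)$. You use the identity $H=J$ only to translate the result into (\textbf{P}). Your bound is purely measure-theoretic, and its equality case (conditional laws given the grid values must be those of $\P^0$) gives uniqueness of the minimizer for free, which the paper does not state. The paper's Jensen computation is a self-contained proof of the same data-processing inequality, written in the control variables.

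Your regularity treatment is also more careful than the paper's. The paper asserts $h_i\in C^{1,2}([t_i,t_{i+1}]\times\R^d)$, but $h_i(t_{i+1},x;\mathbf{x}_i)=h_{i+1}(t_{i+1},x;\mathbf{x}_i,x)$ contains $\frac{\d\mu}{\d\mu^0_{\mathcal{T}}}$ unsmoothed in its $(i+1)$-th argument. Localizing to $[t_i,t_{i+1}-\varepsilon]$ and passing to the limit with the finite-entropy bound, as you propose, is the right repair.
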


\begin{proof}
\textit{Step 1: definition of the candidate optimal measure $\mathbb{P}^*$.} Following the approach of \cite{pham_generative}, we begin by introducing the definition of the probability measure $\P^*$ such that $\P^{*}\ll\P^0$. Observe that
\begin{align*}
    \E^{\P^0}\left[\frac{\d\mu}{\d\mu^0_\mathcal{T}}(X_{t_1},\ldots,X_{t_N}) \right] &= \int_{(\R^d)^N} \frac{\d\mu}{\d\mu^0_\mathcal{T}}(x_1,\ldots,x_N) \mu^0_\mathcal{T} (\d x_1,\ldots,\d x_N) =1. 
\end{align*}
This suggests the existence of a probability $\P^* \ll \P^0$ defined by
\begin{equation*}
    \frac{\d\P^*}{\d\P^0} = \frac{\d\mu}{\d\mu^0_\mathcal{T}}(X_{t_1},\ldots,X_{t_N})  
\end{equation*}
on $\mathcal{F}_T$, and we can define the martingale given by
\begin{equation}\label{def_martingale_Z}
    Z_t = \E^{\P^0} \left[\frac{\d\P^*}{\d\P^0} \; \big\vert \; \mathcal{F}_t \right], \; \; 0 \leq t \leq T,
\end{equation}
with $Z_0 = 1$. 

\textit{Step 2: expression of the martingale $Z$ as a Doléans-Dade exponential martingale.}
For any fixed index $i\in \{0,\ldots,N-1\}$, and $t\in [t_i,t_{i+1}]$, using the definition \eqref{h_i} of the function $h_i(t,x;\mathbf{x}_{i})$ and the tower property of the conditional expectation, we can write 
\begin{equation*}
    h_i(t,x;\,\mathbf{x}_{i}) = \E^{\P^0} \left[h_{i+1}(t_{i+1},X_{t_{i+1}};\, \mathbf{x}_{i},X_{t_{i+1}}) \, \big\vert \, X_t=x   \right],
\end{equation*}
for $x_1, \ldots, x_i, x \in \mathbb{R}^d$, and so, in particular, 
\begin{equation} \label{terminal-cond}
    h_i(t_{i+1},x; \mathbf{x}_i) =h_{i+1}(t_{i+1},x; \mathbf{x}_i,x),
\end{equation}
with the convention that $h_N(t_N,x;\,\mathbf{x}_{N-1},x) = \frac{\d\mu}{\d\mu^0_\mathcal{T}} (x_1, \ldots, x_{N-1},x)$. By Markov property, we get that
\begin{equation*}
    Z_t = h_i(t,X_t;\mathbf{X}_{t_i}), 
\end{equation*}
and, by the independence and stationarity of the increments of the Lévy process under $\mathbb{P}^0$, it holds
\begin{equation} \label{h_as _convolution}
    h_i(t,x;\mathbf{x}_{i}) = \E_{\xi_{i+1}, \ldots, \xi_N} \left[ \frac{\d\mu}{\d\mu^0_\mathcal{T}}(x_1,\ldots,x_i, x + \xi_{i+1}, \ldots, \; x +  \sum_{j=i+1}^N \xi_j) \right],
\end{equation}
where 
\begin{align*}
    \xi_{i+1} &= \sigma\sqrt{t_{i+1} - t}\, Y_{i+1} + \sum_{n= 1}^{\bar{N}_{t_{i+1}-t}} J^{(i+1)}_n, \\
    \xi_j &= \sigma\sqrt{t_j - t_{j-1}}\, Y_j + \sum_{n= 1}^{\bar{N}_{t_{j}-t_{j-1}}} J^{(j)}_n, \quad \text{for all } j=i+2,\ldots,N,
\end{align*}
where $Y_{j}, \, j=i+2,\ldots,N$, are i.i.d.\ $\mathcal{N}(0,I_d)$, $\bar{N}$ is a Poisson process with intensity $\lambda^0$, $(J^{(j)}_n)_{n\geq 1, j=i+1,\ldots,N}$ are i.i.d.\ random variables distributed according to $\nu^0$, independent of $Y_j$ and $\bar{N}$. Thanks to \eqref{h_as _convolution}, we get that the function $h_i(t,x;\mathbf{x}_{i})$ writes as an integral with respect to a smooth density, given by the convolution of the Gaussian density with the jump distribution $\nu^0$. Therefore, the function $(t,x)\mapsto h_i(t, x; \mathbf{x}_i)$ belongs to $C^{1,2}([t_i,t_{i+1}]\times\R^d)$, and we can apply Itô's formula under the measure $\P^0$ to $Z_t = h_i(t,X_t; \mathbf{X}_{t_i})$. Since $Z$ is a $\P^0$-martingale, its finite-variation part must vanish, therefore we obtain
\begin{equation}
\begin{split} \label{eq_exp_mg}
    \d Z_t &= \nabla_x h_i(t,X_{t^-}; \mathbf{X}_{t_i})\trans \, \sigma \d W^0_t + \int_{\R^d} (h_i(t,X_{t^-}+z; \mathbf{X}_{t_i})-h_i(t,X_{t^-}; \mathbf{X}_{t_i})) (N(\d t,\d z)-\lambda^0\nu^0(\d z)\d t)\\
    &= Z_{t^-} \Bigg( \nabla_x \ln h_i(t,X_{t^-}; \mathbf{X}_{t_i})\trans \sigma \d W^0_t +\int_{\R^d} \Bigg(\lambda^0 \frac{h_i(t,X_{t^-}+z; \mathbf{X}_{t_i})}{h_i(t,X_{t^-}; \mathbf{X}_{t_i})}-\lambda^0\Bigg) \left(\frac{N(\d t,\d z)}{\lambda^0}- \nu^0(\d z)\d t\right) \Bigg)
\end{split}
\end{equation}
in $(t_i, t_{i+1}],\, i=0,\ldots,N-1$, where we use that $h_i(t, x; \mathbf{x}_i)$ is strictly positive thanks to Assumption \ref{assumption_2}. By defining the processes $\alpha^*_t$ as in \eqref{def_of_drift}, and $\lambda^*_t(z)$ as in \eqref{def_of_int}, we get 
\begin{equation*} 
    \d Z_t = Z_{t^-} \Bigg( (\sigma^{-1} \alpha^*_t)\trans \d W^0_t +\int_{\R^d} (\lambda_t^*(z) -\lambda^0) \left(\frac{N(\d t,\d z)}{\lambda^0}- \nu^0(\d z)\d t\right) \Bigg),
\end{equation*}
whose solution is
\begin{multline*} 
    Z_t = \exp\Bigg(\int_{t_i}^t (\sigma^{-1} \alpha^*_s)\trans \d W^0_s - \frac{1}{2}\int_{t_i}^t \|\sigma^{-1} \alpha^*_s\|^2  \d s \\
    +\int_{({t_i},t]\times\mathbb{R}^d} \ln\left(\frac{\lambda^*_s(z)}{\lambda^0}\right) N(\d s,\d z)-\int_{({t_i},t]\times\mathbb{R}^d} (\lambda^*_s(z)-\lambda^0) \nu^0(\d z)\d s\Bigg),
\end{multline*}
for $t\in (t_i, t_{i+1}],\, i=0,\ldots,N-1$. From this last expression, we identify the Doléans–Dade exponential martingale, which we want to use in Girsanov’s theorem. 

\textit{Step 3: application of Girsanov's theorem}. 
Under the assumption that $H(\P^*\,\vert\,\P^0)<\infty$, by Girsanov's theorem there exist an $\R^d$-valued adapted process $\alpha^*=(\alpha^*_t)_{t\in[0,T]}$ and an $\R^+$-valued predictable process $\lambda^*(z) = (\lambda^*_t(z))_{t\in[0,T]}$, for $z\in\R^d$, defined in \eqref{def_of_drift} and \eqref{def_of_int}, such that it holds
\begin{multline*} 
    \frac{\d\P^*}{\d\P^0} =Z_T= \exp\Bigg(\int_0^T (\sigma^{-1} \alpha^*_s)\trans \d W^0_s - \frac{1}{2}\int_0^T \|\sigma^{-1} \alpha^*_s\|^2 \d s \\
    +\int_{(0,T] \times \mathbb{R}^d} \ln\left(\frac{\lambda^*_s(z)}{\lambda^0}\right) N(\d s,\d z)-\int_{(0,T] \times\mathbb{R}^d} (\lambda^*_s(z)-\lambda^0) \nu^0(\d z)\d s\Bigg). 
\end{multline*}
Moreover, it follows that the processes
\begin{align}
    &W^0_t - \int_0^t \sigma^{-1}\alpha^*_s \d s, \label{new_BM}  \\
    &\int_{(0,t]\times\mathbb{R}^d} \ln\left( \frac{\lambda^*_s(z)}{\lambda^0}\right) N(\d s,\d z) - \int_{(0,t]\times\mathbb{R}^d} \lambda^*_s(z)\ln\left( \frac{\lambda^*_s(z)}{\lambda^0}\right) \nu^0(\d z)\d s, \label{new_int}
\end{align}
for $t\in [0,T]$, are $\P^*$-martingales: \eqref{new_BM} defines a new $\P^*$-Brownian motion that we denote by $W$, while from \eqref{new_int} we get that $\lambda^*_t(z)\nu^0(\d z)\d t$ is the new intensity measure of the random Poisson measure $N(\d t,\d z)$ under $\P^*$. Hence, we can conclude that the new dynamics of the process $X=(X_t)_{t\in[0,T]}$ under the probability $\P^*$ is 
\begin{equation*}
    X_t = X_0 + \int_0^t \alpha^*_s\,\d s + \sigma \, W_t + \int_{(0,t]\times\mathbb{R}^d} zN(\d s,\d z), \quad t\in [0,T],
\end{equation*}
where $N(\d t,\d z)$ has intensity measure $\lambda^*_t(z)\nu^0(\d z)\d t$.

\textit{Step 4: optimality of $\P^*$}. 
We show now that the processes $\alpha^*$ and $\lambda^*(z)$ are optimal for the minimization of $J(\alpha,\lambda)$. By definition we have that $J(\alpha^*,\lambda^*)=H(\P^*\vert\P^0) = H(\mu\vert\mu^0_\mathcal{T})$. Let $\alpha,\lambda$ be the control and the intensity associated to another probability $\P$, then using Jensen inequality:
\begin{align*}
    1 &= \E^{\P^0} \left[ \frac{\d\mu}{\d\mu^0_\mathcal{T}}(X_{t_1}, \ldots,X_{t_N}) \right]  \\
    &= \E^\P \Bigg[\exp\Bigg( \ln\frac{\d\mu}{\d\mu^0_\mathcal{T}}\left(X_{t_1}, \ldots, X_{t_N}\right) - \int_0^T (\sigma^{-1} \alpha_t)\trans \d W^0_t + \frac{1}{2}\int_0^T \|\sigma^{-1} \alpha_t \|^2 \d t  -\int_{(0,T]\times\mathbb{R}^d} \ln\frac{\lambda_t(z)}{\lambda^0} N(\d t,\d z)\\
    & \quad \quad + \int_{(0,T]\times\mathbb{R}^d} (\lambda_t(z)-\lambda^0) \nu^0(\d z)\d t\Bigg)\Bigg] \\
    &\geq \exp\Bigg(\E^\P \Bigg[\ln\frac{\d\mu}{\d\mu^0_\mathcal{T}}\left(X_{t_1}, \ldots, X_{t_N}\right) - \int_0^T (\sigma^{-1} \alpha_t)\trans \d W^\P_t - \frac{1}{2}\int_0^T \|\sigma^{-1} \alpha_t\|^2 \d t + \int_{(0,T]\times\mathbb{R}^d} (\lambda_t(z)-\lambda^0) \nu^0(\d z)\d t \\
    &\quad \quad - \int_{(0,T]\times\mathbb{R}^d} \ln\frac{\lambda_t(z)}{\lambda^0} \lambda_t(z) \nu^0(\d z)\d t - \int_{(0,T]\times\mathbb{R}^d} \ln\frac{\lambda_t(z)}{\lambda^0} (N(\d t,\d z) -\lambda_t(z) \nu^0(\d z)\d t) \Bigg]\Bigg)\\
    &= \exp \Bigg(H(\mu\vert\mu^0_\mathcal{T})- \E^\P\Bigg[ \int_{(0,T]\times\mathbb{R}^d} \Bigg(\lambda_t(z)\ln\frac{\lambda_t(z)}{\lambda^0}- \lambda_t(z)+\lambda^0\Bigg) \nu^0(\d z)\d t + \frac{1}{2}\int_0^T \|\sigma^{-1} \alpha_t\|^2 \d t \Bigg]\Bigg) \\
    &= \exp \Bigg(H(\mu\vert\mu^0_\mathcal{T})- J(\alpha,\lambda)\Bigg).
\end{align*}
Hence
\begin{equation*}
    J(\alpha^*,\lambda^*) =H(\mu\vert\mu^0_\mathcal{T}) \leq J(\alpha,\lambda),
\end{equation*}
and therefore also 
\begin{equation*}
    H(\P^* \vert \P^0) \leq H(\P \vert \P^0), \quad \forall \P \in \mathcal{P}^\mu_\mathcal{T}(\Omega).
\end{equation*}
This shows that $\P^*$ realizes the minimum in the SBJTS problem.

\textit{Step 5: check the constraint on the joint law.} 
Finally we check that the constraint on the joint distribution is satisfied. For any function $\phi$ bounded measurable on $(\R^d)^N$:
\begin{align*}
    \E^{\P^*}[\phi(X_{t_1},\ldots,X_{t_N})] &= \E^{\P^0}\left[ \frac{\d\mu}{\d\mu^0_\mathcal{T}}\left(X_{t_1}, \ldots, X_{t_N}\right) \phi(X_{t_1},\ldots,X_{t_N})\right]\\
    &= \int_{(\R^d)^N} \frac{\d\mu}{\d\mu^0_\mathcal{T}}\left(x_1, \ldots, x_N\right)\phi(x_1, \ldots, x_N) \mu^0_\mathcal{T}(\d x_1, \ldots,\d x_N)  \\
    &=\int_{(\R^d)^N}\phi(x_1, \ldots, x_N)\mu\left(\d x_1, \ldots, \d x_N\right).
\end{align*}
This proves that $\P^*\circ(X_{t_1}, \ldots, X_{t_N})^{-1}=\mu$ and completes the proof.
\end{proof}

\noindent \textbf{Pure jump case.}
Theorem \ref{main_theorem} holds also in the pure jump case. Indeed, if we assume that $\sigma$ is null, i.e.\ that $\mathbb{P}^0$ is the law of a compound Poisson process with jumps distributed according to $\nu^0$, then we can still write the SBJTS problem, define the solution as in \eqref{def_solution_SB} and the martingale
\begin{equation*}
    Z_t = \mathbb{E}^{\P^0}\left[ \frac{\d\mu}{\d\mu^0_\mathcal{T}}\left(X_{t_1}, \ldots, X_{t_N} \right) \; \big\vert \; \mathcal{F}_t\right], \quad t\leq T.
\end{equation*}
By Markov property we have $Z_t = h_i(t,X_t;\mathbf{X}_{t_i})$, for $t\in (t_i, t_{i+1}],\, i=0,\ldots,N-1$, and since $X=(X_t)_{t\geq 0}$ is a pure jump process, it holds
\begin{equation*}
    h_i(t,X_{t}; \mathbf{X}_{t_i}) = h_i(t_i,X_{t_i};\mathbf{X}_{t_i}) + \sum_{t_i<s\leq t} h_i(s,X_{s}; \mathbf{X}_{t_i}) - h_i(s,X_{s^-}; \mathbf{X}_{t_i})
\end{equation*}
which gives
\begin{equation*}
    \d Z_t = \int_{\R^d} (h_i(t,X_{t^-}+z; \mathbf{X}_{t_i})-h_i(t,X_{t^-}; \mathbf{X}_{t_i})) (N(\d t,\d z)-\lambda^0 \nu^0(\d z)\d t )
\end{equation*}
as $Z$ is martingale. Hence we get that the new compensator of the Poisson measure $N(\d t,\d z)$ under the new measure $\mathbb{P}^*$ is $\lambda^*_t(z)\nu^0(\d z)\d t$, where
\begin{equation*}
    \lambda^*_t(z) = \lambda^0 \frac{h_i(t,X_{t^-}+z; \mathbf{X}_{t_i})}{h_i(t,X_{t^-}; \mathbf{X}_{t_i})},
\end{equation*}
and the dynamics of the optimal process under $\mathbb{P}^*$ is
\begin{equation*} 
    X_t = X_0 + \int_{(0,t]\times \R^d} z N(\d s,\d z), \quad t\in [0,T].
\end{equation*}
Notice that we only use that $h_i(t,x; \mathbf{x}_{i})$ is strictly positive and measurable, but no other smoothness assumptions are required. \\

Thanks to Theorem \ref{main_theorem}, we obtain the dynamics \eqref{SDE_under P*} of the optimal process $X$ under the probability measure $\mathbb{P}^*$ which solves the SBJTS problem. Our objective is to simulate this optimal dynamics. As follows from \eqref{def_of_drift} and \eqref{def_of_int}, both the drift and the jump intensity are time, state and path dependent, through their dependence on the vector $\mathbf{x}_{i}$. This path dependence is essential, as it requires the construction of estimators to approximate the optimal drift and jump intensity that have to be evaluated along each simulated trajectory.

\section{Approximation of the optimal drift and intensity} \label{section_estimators}

In this section, we focus on the expressions of the optimal drift and intensity of jumps to highlight how we can get fully data driven estimators that allow the simulation of the generative process \eqref{SDE_under P*}. We first derive an explicit expression for the function $h_i(t,x; \mathbf{x}_{i})$, and then we present the estimators that are used in practice.

\subsection{Explicit formula for the drift and intensity of jumps}
We state the following proposition using that $\mu^0_\mathcal{T}$ admits a density with respect to the Lebesgue measure, as by definition $\mu^0_\mathcal{T} = \mathbb{P}^0 \circ(X_{t_1},\ldots, X_{t_N})^{-1}$, where $X$ follows the dynamics \eqref{dynamics_x_P0}. Moreover, as $\mu \ll \mu^0_\mathcal{T}$ by hypothesis, $\mu$ admits also a density with respect to the Lebesgue measure. By a slight abuse of notation, we denote these densities respectively by $(x_1,\ldots,x_N) \mapsto \mu^0_\mathcal{T}(x_1,\ldots,x_N)$ and $(x_1,\ldots,x_N) \mapsto \mu(x_1,\ldots,x_N)$.

\begin{proposition}\label{prop_estimators}
For $i=0,\ldots,N-1$, $\mathbf{x}_i\in(\R^d)^i$, $x\in \mathbb{R}^d$ and $z\in\R^d$, the functions $a(t,x;\mathbf{x}_i)$ and $\Lambda(t,x,z;\mathbf{x}_{i})$ defined in Theorem \ref{main_theorem} are given by
\begin{align}
     a(t,x;\mathbf{x}_i) &= \sigma \sigma\trans \frac{\E_\mu [\nabla_x F_i(t,x_i,x,X_{t_{i+1}}) \vert \mathbf{X}_{t_i} = \mathbf{x}_{i}]}{\E_\mu [F_i(t,x_i,x,X_{t_{i+1}}) \vert \mathbf{X}_{t_i} = \mathbf{x}_{i}]}, \quad t\in [t_i,t_{i+1}), \label{a general} \\
     \Lambda(t,x,z;\mathbf{x}_{i}) &= \lambda^0 \frac{\E_\mu [ F_i(t,x_i,x+z,X_{t_{i+1}}) \vert \mathbf{X}_{t_i} = \mathbf{x}_{i}]}{\E_\mu [F_i(t,x_i,x,X_{t_{i+1}}) \vert \mathbf{X}_{t_i} = \mathbf{x}_{i}]}, \quad t\in (t_i,t_{i+1}), \label{lambda general}
\end{align}
where 
\begin{equation} \label{funct_Fi}
    F_i(t,x_i,x,x_{i+1})=\frac{f^0_{t_{i+1}-t}(x_{i+1}-x)} {f^0_{t_{i+1}-t_i}(x_{i+1}-x_i)},
\end{equation}
the function $f^0_{t_{i+1}-t}$ is the density of the increment $X_{t_{i+1}}-X_t$, for $t\in[t_i,t_{i+1})$, under the reference measure $\mathbb{P}^0$, and $\mathbb{E}_\mu$ denotes the expectation under $\mu$. 
\end{proposition}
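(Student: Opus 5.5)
Our plan is to compute $h_i(t,x;\mathbf{x}_i)$ explicitly as an integral against the densities of $\mu$ and $\mu^0_\mathcal{T}$, and then read off $a$ and $\Lambda$ from their definitions in Theorem \ref{main_theorem}.

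\textbf{Step 1: explicit form of $h_i$.} Under $\P^0$ the process $X$ is a Lévy process started at $0$, so it has independent stationary increments and the Markov property. Write $f^0_s$ for the density of an increment of length $s$ (it exists because $\sigma$ is invertible: the Gaussian density convolved with the compound Poisson law). Then
\begin{equation*}
\mu^0_\mathcal{T}(x_1,\ldots,x_N)=\prod_{j=1}^N f^0_{t_j-t_{j-1}}(x_j-x_{j-1}), \qquad x_0=0 .
\end{equation*}
For $t\in[t_i,t_{i+1})$, condition on $\mathbf{X}_{t_i}=\mathbf{x}_i$ and $X_t=x$. The conditional density of $(X_{t_{i+1}},\ldots,X_{t_N})$ under $\P^0$ is
\begin{equation*}
f^0_{t_{i+1}-t}(x_{i+1}-x)\prod_{j=i+2}^N f^0_{t_j-t_{j-1}}(x_j-x_{j-1}).
\end{equation*}
Hence
\begin{equation*}
h_i(t,x;\mathbf{x}_i)=\int \frac{\mu(\mathbf{x}_i,x_{i+1},\ldots,x_N)}{\mu^0_\mathcal{T}(\mathbf{x}_N)}\, f^0_{t_{i+1}-t}(x_{i+1}-x)\prod_{j\ge i+2} f^0_{t_j-t_{j-1}}(x_j-x_{j-1})\,\d x_{i+1}\cdots \d x_N .
\end{equation*}
Dividing by the product form of $\mu^0_\mathcal{T}$ cancels every factor with $j\ge i+2$. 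What remains is
\begin{equation*}
h_i(t,x;\mathbf{x}_i)=\frac{1}{\mu^0_{\mathcal{T},i}(\mathbf{x}_i)}\int \mu(\mathbf{x}_i,x_{i+1},\ldots,x_N)\,F_i(t,x_i,x,x_{i+1})\,\d x_{i+1}\cdots\d x_N ,
\end{equation*}
where $\mu^0_{\mathcal{T},i}(\mathbf{x}_i)=\prod_{j\le i}f^0_{t_j-t_{j-1}}(x_j-x_{j-1})$ is the marginal density of $\mathbf{X}_{t_i}$ under $\P^0$. Integrating out $x_{i+2},\ldots,x_N$ and dividing and multiplying by the marginal $\mu_i(\mathbf{x}_i)$ gives
\begin{equation*}
h_i(t,x;\mathbf{x}_i)=\frac{\mu_i(\mathbf{x}_i)}{\mu^0_{\mathcal{T},i}(\mathbf{x}_i)}\,\E_\mu\!\left[F_i(t,x_i,x,X_{t_{i+1}})\,\middle|\,\mathbf{X}_{t_i}=\mathbf{x}_i\right].
\end{equation*}
The prefactor depends only on $\mathbf{x}_i$, which is the key structural point. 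For $i=0$ the conditioning is trivial and $x_0=0$.

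\textbf{Step 2: the intensity.} By definition $\Lambda=\lambda^0\, h_i(t,x+z;\mathbf{x}_i)/h_i(t,x;\mathbf{x}_i)$. Substituting the formula from Step 1, the prefactor cancels and \eqref{lambda general} follows immediately. Strict positivity of the denominator comes from Assumption \ref{assumption_2}, as in the proof of Theorem \ref{main_theorem}.

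\textbf{Step 3: the drift, the main obstacle.} We have $a=\sigma\sigma\trans\nabla_x h_i/h_i$, so we must justify
\begin{equation*}
\nabla_x \E_\mu[F_i(\cdot)\mid \mathbf{x}_i]=\E_\mu[\nabla_x F_i(\cdot)\mid\mathbf{x}_i].
\end{equation*}
Here $f^0_s$ is a mixture $\sum_n e^{-\lambda^0 s}\frac{(\lambda^0 s)^n}{n!}\,\varphi_{s}*\nu^{0,*n}$ of Gaussian densities $\varphi_s$ with covariance $s\sigma\sigma\trans$, for $s=t_{i+1}-t>0$. Its gradient is therefore bounded by $C_s\sup_y|\nabla\varphi_s(y)|$, uniformly in $x$. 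This is why $t<t_{i+1}$ is required.

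The denominator $f^0_{t_{i+1}-t_i}(x_{i+1}-x_i)$ does not depend on $x$. Under the conditional law of $X_{t_{i+1}}$ given $\mathbf{x}_i$, we have
\begin{equation*}
\E_\mu\big[1/f^0_{t_{i+1}-t_i}(X_{t_{i+1}}-x_i)\,\big|\,\mathbf{x}_i\big]=\int \frac{\mu_{i+1}(\mathbf{x}_i,y)}{\mu_i(\mathbf{x}_i)\, f^0_{t_{i+1}-t_i}(y-x_i)}\,\d y ,
\end{equation*}
which may be infinite in general. We will therefore argue more carefully with dominated convergence. Since $|\nabla_x F_i|\le C_s F_i^{\text{dom}}$ with a dominating function that is integrable whenever $h_i$ is finite, the plan is to bound $|\nabla\varphi_s(y)|\le C\,\varphi_{2s}(y)$ (or a similar Gaussian of slightly larger variance). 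This turns the gradient into a function comparable to a reference density at a nearby time, whose integral against $\mu/\mu^0_\mathcal{T}$ is an $h$-type quantity. That quantity is finite by the same conditioning identity as in Step 1 (equivalently, by the $C^{1,2}$ regularity already asserted in the proof of Theorem \ref{main_theorem}).

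Once the interchange is justified, the prefactor cancels again in the ratio and we obtain \eqref{a general}.
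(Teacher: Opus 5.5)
Your proposal is correct and follows the paper's proof: you factor $\mu^0_\mathcal{T}$ into increment densities, cancel every factor with $j\ge i+2$, and note that the surviving prefactor $\mu_i(\mathbf{x}_i)/\mu^0_i(\mathbf{x}_i)$ depends only on $\mathbf{x}_i$, so it cancels in both ratios defining $a$ and $\Lambda$. Your Step 3 is more careful than the paper, which simply moves $\nabla_x$ inside $\E_\mu$ without comment; to close it, dominate $|\nabla\varphi_s|$ by $C\varphi_{s'}$ with $s'$ strictly between $t_{i+1}-t$ and $t_{i+1}-t_i$, rather than $s'=2(t_{i+1}-t)$ (which exceeds $t_{i+1}-t_i$ when $t$ is close to $t_i$, and then the dominating ratio need not be $\mu$-integrable); with $s'<t_{i+1}-t_i$ you get $|\nabla_x F_i(t,x_i,x',y)|\le C$ uniformly in $y$ and in $x'$ near $x$, so dominated convergence applies at once for $t\in(t_i,t_{i+1})$.
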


\begin{proof}

Consider $\mu^0_\mathcal{T}(x_1,\ldots,x_N)$ the density with respect to the Lebesgue measure of the vector $(X_{t_1},\ldots,X_{t_N})$ under $\mathbb{P}^0$. By independence of the increments, we can write 
\begin{equation} \label{decomp_density}
    \mu^0_\mathcal{T}(x_1,\ldots,x_N) = \prod_{j=0}^{N-1} f^0_{t_{j+1}-t_{j}}(x_{j+1}-x_{j}),
\end{equation}
where we set $x_0=0$ and $t_0=0$. For fixed $i=0,\ldots,N-1$, $t\in (t_i,t_{i+1})$, $\mathbf{x}_i\in(\R^d)^i$, and $x\in \mathbb{R}^d$, we can rewrite the function $h_i(t,x;\mathbf{x}_{i})$ using \eqref{decomp_density} as
\begin{align*}
    h_i(t,x;\mathbf{x}_{i}) &= \E^{\P^0} \left[\frac{\d \mu}{\d \mu^0_\mathcal{T}}(x_1,\ldots,x_i, X_{t_{i+1}}, \ldots,X_{t_N}) \; \big\vert \; X_t=x\right] \\
    &= \int_{(\R^d)^{N-i}} \frac{\mu(x_1,\ldots,x_N)}{\mu^0_i(x_1,\ldots,x_i)} \frac{f^0_{t_{i+1}-t}(x_{i+1}-x)} {f^0_{t_{i+1}-t_i}(x_{i+1}-x_i)} \, \d x_{i+1} \cdots \d x_N
\end{align*}
where we call $\mu^0_i(x_1,\ldots,x_i)$ the density of $(X_{t_1},\ldots,X_{t_i})$ under $\mu^0_\mathcal{T}$. Hence, if we introduce the function
\begin{equation*} 
    F_i(t,x_i,x,x_{i+1})=\frac{f^0_{t_{i+1}-t}(x_{i+1}-x)} {f^0_{t_{i+1}-t_i}(x_{i+1}-x_i)},
\end{equation*}
and we denote by $\mu_i(x_1,\ldots,x_i)$ the density of $(X_{t_1},\ldots,X_{t_i})$ under $\mu$ with respect to the Lebesgue measure, i.e. 
\begin{equation*}
    \mu_i(x_1,\ldots,x_i) = \int_{(\R^d)^{N-i}} \mu(x_1,\ldots,x_N)\, \d x_{i+1} \cdots \d x_N,
\end{equation*} 
we obtain 
\begin{align*}
    h_i(t,x;\mathbf{x}_{i}) &= \int_{(\R^d)^{N-i}} F_i(t,x_i,x,x_{i+1}) \frac{\mu_i(x_1,\ldots,x_i)}{\mu^0_i(x_1,\ldots,x_i)} \frac{\mu(x_1,\ldots,x_N)}{\mu_i(x_1,\ldots,x_i)}\, \d x_{i+1} \cdots \d x_N \\
    &= C\int_{(\R^d)^{N-i}} F_i(t,x_i,x,x_{i+1})  \frac{\mu(x_1,\ldots,x_N)}{\mu_i(x_1,\ldots,x_i)}\, \d x_{i+1} \cdots \d x_N,
\end{align*}
with $C$ a constant depending only on $(x_1, \ldots, x_i)$. Therefore,
\begin{equation*} 
    h_i(t,x;\mathbf{x}_{i}) = C\, \E_\mu [F_i(t,x_i,x,X_{t_{i+1}}) \vert \mathbf{X}_{t_i} = \mathbf{x}_{i}]. 
\end{equation*}
This allows to get the expression
\begin{equation*}
    a(t,x;\mathbf{x}_i) = \sigma \sigma\trans \frac{\nabla_x h_i(t, x; \mathbf{x}_i)}{h_i(t, x; \mathbf{x}_i)} = \sigma \sigma\trans \frac{\E_\mu [\nabla_x F_i(t,x_i,x,X_{t_{i+1}}) \vert \mathbf{X}_{t_i} = \mathbf{x}_{i}]}{\E_\mu [F_i(t,x_i,x,X_{t_{i+1}}) \vert \mathbf{X}_{t_i} = \mathbf{x}_{i}]}, 
\end{equation*}
which is also well defined at $t=t_i$ thanks to \eqref{terminal-cond}, and 
\begin{align} \label{computation_intensity}
    \Lambda(t,x,z;\mathbf{x}_{i}) =\lambda^0 \frac{h_i(t,x+z; \mathbf{x}_{i})}{h_i(t,x; \mathbf{x}_{i})} = \lambda^0 \frac{\E_\mu [ F_i(t,x_i,x+z,X_{t_{i+1}}) \vert \mathbf{X}_{t_i} = \mathbf{x}_{i}]}{\E_\mu [F_i(t,x_i,x,X_{t_{i+1}}) \vert \mathbf{X}_{t_i} = \mathbf{x}_{i}]}.
\end{align}
\end{proof}

\begin{remark}
At the time values $\{t_i\}_{i=1, \ldots, N-1}$ the value of the intensity function $\lambda^*_{t_{i+1}}(z)= \Lambda(t_{i+1},X_{t_{i+1}^-},z;\mathbf{X}_{t_i})$ is given by following expression:
\begin{align*}
    \Lambda(t_{i+1},x,z;\mathbf{x}_{i}) &= \lambda^0 \frac{\E^{\P^0} \left[\frac{\d \mu}{\d \mu^0_\mathcal{T}}(x_1,\ldots,x_i, X_{t_{i+1}}, \ldots,X_{t_N}) \; \big\vert \; X_{t_{i+1}}=x+z\right]}{\E^{\P^0} \left[\frac{\d \mu}{\d \mu^0_\mathcal{T}}(x_1,\ldots,x_i, X_{t_{i+1}}, \ldots,X_{t_N}) \; \big\vert \; X_{t_{i+1}}=x\right]}\\
    &= \lambda^0 \frac{\mu_{i+1}(x_1, \ldots, x_{i}, x+z)}{\mu^0_{i+1}(x_1, \ldots, x_{i}, x+z)} \frac{\mu^0_{i+1}(x_1, \ldots, x_{i}, x)}{\mu_{i+1}(x_1, \ldots, x_{i}, x)} 
\end{align*}
hence we get
\begin{align} \label{int_at_grid_time}
    \Lambda(t_{i+1},x,z;\mathbf{x}_{i}) = \lambda^0 \frac{f^0_{t_{i+1}- t_{i}}(x-x_{i})}{f^0_{t_{i+1}-t_{i}}(x+z-x_{i})}\frac{\mu_{i+1}(x_1, \ldots, x_{i}, x+z)}{\mu_{i+1}(x_1, \ldots, x_{i}, x)}.
\end{align} 
Notice that by Theorem \ref{main_theorem} we get that $\lambda^*_t(z) = \Lambda(t,X_{t^-},z;\mathbf{X}_{t_i})$ is well-defined and predictable with respect to the natural filtration of the process, as required for the construction of the controlled jump–diffusion. This predictability property is explicit in \eqref{lambda general} and \eqref{int_at_grid_time}, where the dependence on the past is expressed exclusively through values of the time series observed at strictly earlier times.
\end{remark}

\noindent Thanks to Proposition \ref{prop_estimators}, we get expressions of the drift and intensity of jumps involving the expectation with respect to $\mu$. This is particularly convenient as this formulation allows to construct estimators in which the expectation under $\mu$ is approximated using the initial time series sampled from the target distribution.\\

\noindent \textbf{Analytical expressions:} using the notation introduced in the proof of Theorem \ref{main_theorem}, we have that the increments of the process $X$ under $\mathbb{P}^0$ satisfy
\begin{equation} \label{decomp_incr}
    X_{t_{i+1}} - X_{t} \stackrel{law}{=} \sigma\sqrt{t_{i+1}-t}\, Y + \sum_{n= 1}^{\bar{N}_{t_{i+1}-t}} J_n
\end{equation}
for $t\in [t_i,t_{i+1})$, $i=0,\ldots, N-1$, where $Y$ is a standard Gaussian random variable $\mathcal{N}(0,I_d)$, $\bar{N}$ is a Poisson process with intensity $\lambda^0$, $(J_n)_{n\geq 1}$ are i.i.d.\ random variables distributed according to $\nu^0$, independent of $Y$ and $\bar{N}$. Conditioning on the values of $\bar{N}$, we can explicitly specify the density function of increments as 
\begin{equation} \label{density_increments}
    f^0_{t_{i+1}-t}(z) = \frac{e^{-\lambda^0 (t_{i+1}-t)}}{\left(2\pi (t_{i+1}-t) \right)^\frac{d}{2} |\det \sigma|}\sum_{k\geq 0} \frac{(\lambda^0 (t_{i+1}-t))^k}{k!} \int_{\R^d} \exp{\left(-\frac{\|  \sigma^{-1} (z-y) \|^2}{2(t_{i+1}-t)} \right)} (\nu^0)^{*k}(dy),
\end{equation}
for $z\in\R^d$, where we denote by $(\nu^0)^{*k}$ the $k$-times convolution product of the measure $\nu^0$. Plugging the density \eqref{density_increments} into the function $F_i(t,x_i,x,x_{t_{i+1}})$, we achieve the expressions for the drift and intensity of jumps. In the numerical simulations we will fix a particular choice for the distribution $\nu^0$ in order to have tractable estimators.

\subsection{Kernel regression estimators}
We derive the approximation through classical kernel methods of the conditional expectation under the target distribution $\mu$ that appears in the expressions of the optimal drift and intensity. Consider data samples $\mathbf{X}^{(m)} = (X^{(m)}_{t_1},\ldots,X^{(m)}_{t_N}), \, m=1,\ldots,M$ from $\mu$. Starting from the expressions \eqref{a general} and \eqref{lambda general}, we can estimate the conditional expectation under $\mu$ using the Nadaraya-Watson kernel estimator. For $i=0,\ldots,N-1$ fixed, $\mathbf{x}_i \in(\R^d)^i$, $x, z\in\R^d$ we get 
\begin{equation} \label{drift app}
    \hat{a}(t,x;\mathbf{x}_i) = \sigma \sigma\trans \frac{\displaystyle\sum_{m=1}^M \nabla_x F_i(t,x_i,x,X^{(m)}_{t_{i+1}}) \,\mathbf{K}^i_{\mathbf{h}_i}(\mathbf{x}_{i}-\mathbf{X}^{(m)}_{t_i}) }{\displaystyle\sum_{m=1}^M F_i(t,x_i,x,X^{(m)}_{t_{i+1}}) \,\mathbf{K}^i_{\mathbf{h}_i}(\mathbf{x}_{i}-\mathbf{X}^{(m)}_{t_i})}, \quad t\in [t_i, t_{i+1}),
\end{equation}
and 
\begin{align} \label{int app}
    \hat{\Lambda}(t,x,z;\mathbf{x}_{i}) = \lambda^0 \frac{\displaystyle\sum_{m=1}^M F_i(t,x_i,x+z,X^{(m)}_{t_{i+1}}) \, \mathbf{K}^i_{\mathbf{h}_i}(\mathbf{x}_{i}-\mathbf{X}^{(m)}_{t_i})}{\displaystyle\sum_{m=1}^M F_i(t,x_i,x,X^{(m)}_{t_{i+1}}) \, \mathbf{K}^i_{\mathbf{h}_i}(\mathbf{x}_{i}-\mathbf{X}^{(m)}_{t_i})}, \quad t\in (t_i, t_{i+1}).
\end{align}
Here $\mathbf{K}^i$ is a multivariate kernel function operating on $i$ arguments of dimension $d$. Given a bandwidth vector $\mathbf{h}_i=(h_1, \ldots, h_i)\in (0, \infty)^i$, we define the rescaled kernel by $\mathbf{K}^i_{\mathbf{h}_i}(\mathbf{x}_{i}) = \frac{1}{h_1^d \cdots h_i^d} \mathbf{K}^i(\frac{x_1}{h_1}, \ldots,\frac{x_i}{h_i})$. In our simulations, we take $h_j=h$ for all $j=1,\ldots,i$ and the standard multiplicative kernel
\begin{equation}\label{kernel_def}
    \mathbf{K}^i(x_1, \ldots, x_i)= \prod_{j=1}^i K(x_j)
\end{equation}
where $K: \R^d\to\R$ is the kernel function $K(x) =  (1-\|x\|^2)^2 \mathbbm{1}_{\|x\|\leq 1}$ on $\R^d$. At the same way, we can derive the estimator for the intensity at times $t_i$, for $i=1, \ldots, N$, starting from \eqref{int_at_grid_time} and deriving the corresponding kernel estimator. This gives fully data-driven estimators that we can compute to simulate the generative process.

\section{Simulation techniques} \label{section_simulation_schemes}
In this section, we develop the main tools to perform the simulation of the optimal generative process \eqref{SDE_under P*}. We first focus on the jump term: indeed, as we get a Poisson random measure with state-dependent intensity under $\mathbb{P}^*$, the jump term is no longer independent of the Brownian term, and this implies that we need to update the value of the intensity along each trajectory. We present a methodology to compute the instantaneous intensity, then we propose two algorithms for the SBJTS generative model, based on different simulation schemes for the jump-diffusion optimal process \eqref{SDE_under P*}. We refer to \cite{bremaud2020probability, Tankov_jump_proc} for the theory of the simulation of non-homogeneous jump processes.

\subsection{Gaussian jump distribution}
In order to derive tractable estimators of the drift and intensity of jumps for the numerical simulations, we want an easy expression for the function \eqref{funct_Fi}, and so in particular for the density of the increments \eqref{density_increments}. This means that we need a distribution $\nu^0$ whose convolutions admit explicit expressions. We could take for example a Dirac distribution, allowing for jumps with constant amplitude, but this choice is not very flexible. For this reason, in our simulations we work with Gaussian distributions.

Let $\nu^0$ be a multivariate Gaussian measure with mean $c\in\R^d$ and covariance matrix $\Gamma\in\R^{d\times d}$, i.e.
\begin{equation*}
    \nu^0(A) = \int_A \frac{1}{(2\pi)^{\frac{d}{2}} \det(\Gamma)^\frac{1}{2}} \exp{\left( -\frac{1}{2} (z-c)\trans \Gamma^{-1}(z-c) \right)} \d z, \quad A\in \mathcal{B}(\mathbb{R}).
\end{equation*}  
Then, in \eqref{decomp_incr} there are all independent Gaussian random variables, so using the properties of convolution we get an explicit formula for the density function \eqref{density_increments} of the increments of the process $X$ under the reference probability measure $\mathbb{P}^0$. Moreover, we make the following assumption:
\begin{itemize}
    \item $\sigma = diag(\sigma_1, \ldots, \sigma_d)$, with $\sigma_p>0$, $p=1,\ldots, d$;
    \item $\Gamma = diag(\gamma_1^2, \ldots, \gamma_d^2)$, with $\gamma_p>0$, $p=1,\ldots,d$.
\end{itemize}
Hence, for $t\in[t_i, t_{i+1})$, $i=0\ldots, N-1$, the expression of the density $f^0_{t_{i+1}-t}$ of the increment $X_{t_{i+1}}-X_{t}$ is 
\begin{equation} \label{density_explicit} 
    f^0_{t_{i+1}-t}(z) = \sum_{k\geq 0} \frac{(\lambda^0 (t_{i+1}-t))^k}{k!} \frac{e^{-\lambda^0 (t_{i+1}-t)}}{(2\pi)^\frac{d}{2} \prod_{p=1}^d \sqrt{\sigma_p^2 (t_{i+1}-t) +k\gamma_p^2} } \exp{\left(-\frac{1}{2} \sum_{p=1}^d \frac{(z_p - k c_{p})^2}{\sigma_p^2(t_{i+1}-t) +k\gamma_p^2}\right)}
\end{equation}
for $z\in\R^d$. For fixed $i=0,\ldots,N-1$, $t\in [t_i, t_{i+1})$, $\mathbf{x}_i \in(\R^d)^i$, $x\in\R^d$, we get an explicit expression for the function $F_i(t,x_i,x,x_{i+1})$ defined in \eqref{funct_Fi}, which allows to compute 
\begin{equation*} 
    a(t,x;\mathbf{x}_i) = \sigma \sigma\trans \frac{\E_\mu \left[\nabla_x F_i(t,x_i,x,X_{t_{i+1}})  \vert \mathbf{X}_{t_i} = \mathbf{x}_{i} \right]}{\E_\mu [F_i(t,x_i,x,X_{t_{i+1}}) \vert \mathbf{X}_{t_i} = \mathbf{x}_{i}]}, 
\end{equation*}
where the computation of $\nabla_x F_i(t,x_i,x,x_{i+1})$ is immediate as the only term to differentiate in the function $F_i(t,x_i,x,x_{i+1})$ is the exponential term at numerator, and for 
\begin{equation*} 
    \Lambda(t,x,z;\mathbf{x}_{i}) = \lambda^0 \,\frac{\E_\mu [ F_i(t,x_i,x+z,X_{t_{i+1}}) \vert \mathbf{X}_{t_i} = \mathbf{x}_{i}]}{\E_\mu [F_i(t,x_i,x,X_{t_{i+1}}) \vert \mathbf{X}_{t_i} = \mathbf{x}_{i}]}.
\end{equation*}
Then using estimators \eqref{drift app} and \eqref{int app}, we approximate the optimal drift $\alpha^*_t = a(t,X_t;\mathbf{X}_{t_i})$ and the optimal intensity $\lambda^*_t(z)= \Lambda(t,X_{t^-},z; \mathbf{X}_{t_i})$. 

\begin{remark}
This framework implies that the components of the Brownian motion and of the jump term of the process $X$ under $\P^0$ are assumed to be independent from each other, as we take diagonal matrices for $\sigma$ and $\Gamma$ to remove the correlation terms and have more tractable expressions. Nevertheless, we will show in the numerical simulations that this simplification is not restrictive, as the proposed generative model is still able to effectively capture the correlations among the different components of the datasets that we consider in Section \ref{section_numerical_test}. 
\end{remark}

\subsection{Simulation of the jump term}\label{section_jump_term}
Once that we have fixed the distribution $\nu^0$, we focus on the non-homogeneous Poisson process, whose instantaneous rate at any time $t$ is given by
\begin{equation*}
    L_t := \int_{\R^d} \lambda^*_t(z)\nu^0(\d z), 
\end{equation*}
and the size of the jumps follows the distribution
\begin{equation*}
    \frac{\lambda^*_t(z)\nu^0}{L_t}. 
\end{equation*}
In order to simulate this jump term, for $t\in (t_i, t_{i+1})$, $i=0, \ldots, N-1$, and given the value $X_t^- = x$, we approximate $\lambda^*_t(z)$ using the kernel estimator \eqref{int app} and $L_t$ with
\begin{equation} \label{estimator_intensity}
    \hat{L}(t, x; \mathbf{x}_{i}) := \int_{\mathbb{R}^d} \hat{\Lambda}(t, x, z; \mathbf{x}_{i}) \nu^0(\d z), 
\end{equation}
which for example can be computed using the Gauss-Hermite quadrature formula. Then we can sample the jump amplitude according to the distribution $\frac{\hat{\Lambda}(t, x, z; \mathbf{x}_{i}) \nu^0}{\hat{L}(t, x; \mathbf{x}_{i})}$ with different numerical methods. 

However, in our simulations we consider the following strategy to rewrite the term $\lambda^*_t(z)\nu^0$, which enables both a more efficient computation of $\hat{L}(t,x; \mathbf{x}_{i})$ and an easy way to sample the jump amplitude. Denoting by $g(z)$ the density of the Gaussian measure $\nu^0$ with respect to the Lebesgue measure, and using the estimator \eqref{int app}, we have
\begin{multline}\label{Gaussian_sum}
    \hat{\Lambda}(t, x, z; \mathbf{x}_{i}) g(z) = \frac{\lambda^0}{D} \sum_{m=1}^M \sum_{j\geq 0} \frac{w_{j,m} }{(2\pi)^{\frac{d}{2}} \prod_{p=1}^d \sqrt{\sigma_p^2(t_{i+1}-t) +j\gamma_p^2}} \exp{ \left(-\frac{1}{2} \sum_{p=1}^d \frac{(z_p-(X^{(m)}_{t_{i+1},p}-x_p- jc_p))^2}{\sigma_p^2 (t_{i+1}-t) +j\gamma_p^2}\right)}\\
    \times \frac{1}{(2\pi)^{\frac{d}{2}} \prod_{p=1}^d \gamma_p} \exp{\left(-\frac{1}{2} \sum_{p=1}^d \frac{(z_p-c_p)^2}{\gamma_p^2}\right)}
\end{multline} 
where $D$ is the denominator in the formula \eqref{int app}, and 
\begin{equation} \label{weights_mixture}
    w_{j,m} = \frac{\frac{(\lambda^0 (t_{i+1}-t))^j}{j!} e^{-\lambda^0 (t_{i+1}-t)} \, \mathbf{K}^i_{\mathbf{h}_i}(\mathbf{x}_{i}-\mathbf{X}^{(m)}_{t_i})} {\displaystyle\sum_{k\geq 0} \frac{(\lambda^0 (t_{i+1}-t_{i}))^k}{k!} \frac{e^{-\lambda^0 (t_{i+1}-t_{i})}}{(2\pi)^{\frac{d}{2}} \prod_{p=1}^d \sqrt{\sigma_p^2(t_{i+1}-t_i) +k\gamma_p^2}} \exp{\left(-\frac{1}{2} \sum_{p=1}^d \frac{(X^{(m)}_{t_{i+1},p}-x_{i,p}-kc_p)^2}{\sigma_p ^2 (t_{i+1}-t_{i}) +k \gamma_p^2}\right)}} 
\end{equation}
Using the property of convolution between Gaussian densities, we can rewrite each product of the sum in \eqref{Gaussian_sum} to get 
\begin{equation} \label{Gaussian_mixture_model}
    \hat{\Lambda}(t, x, z; \mathbf{x}_{i}) g(z) = \frac{\lambda^0}{D} \sum_{m=1}^M \sum_{j\geq 0} w_{j,m} \, C_{j,m} \, g_{j,m}(z), 
\end{equation} 
where we denote by $g_{j,m}$ the density of the multivariate Gaussian 
\begin{equation*}
    \mathcal{N}_{j,m}:= \mathcal{N}\left( \xi_{j,m}, \Sigma_{j} \right), 
\end{equation*}
where $\xi_{j,m}\in \mathbb{R}^d$, $\Sigma_{j}\in \mathbb{R}^d \times \mathbb{R}^d$ are defined by
\begin{align*}
    \xi_{j,m} &= \left(\frac{(X^{(m)}_{t_{i+1}, p}-x_p-jc_{p})\gamma^2_p + c_{p}(\sigma_p^2 (t_{i+1}-t) +j\gamma^2_p)}{\sigma_p^2 (t_{i+1}-t) +(j+1)\gamma^2_p }\right)_{p=1, \ldots, d}\\
    \Sigma_{j} &= diag \left(\frac{\gamma^2_p(\sigma_p^2 (t_{i+1}-t) +j\gamma^2_p)}{\sigma_p^2 (t_{i+1}-t) +(j+1)\gamma^2_p}\right)_{p=1, \ldots, d}
\end{align*}
and by $C_{j,m}$ the normalizing constant
\begin{equation*}
    C_{j,m} = \prod_{p=1}^d \frac{1}{\sqrt{2\pi (\sigma_p^2(t_{i+1}-t) +(j+1)\gamma_p^2)}} \exp{\left( -\frac{(X^{(m)}_{t_{i+1},p}-x_p-(j+1)c_p)^2}{2(\sigma_p^2(t_{i+1}-t)+(j+1)\gamma_p^2)}  \right)},
\end{equation*}
for $m=1,\ldots,M$, $j\geq0$. Therefore we get the Gaussian mixture \eqref{Gaussian_mixture_model}. Substituting this mixture into \eqref{estimator_intensity}, the evaluation of the integral
\begin{equation*}
    \hat{L} (t, x; \mathbf{x}_{i}) = \int_{\R^d} \frac{\lambda^0}{D} \sum_{m=1}^M \sum_{j\geq 0} w_{j,m} \, C_{j,m} \, g_{j,m}(z) \d z  
\end{equation*}
leads to 
\begin{equation}\label{est_int_general_case}
    \hat{L} (t, x; \mathbf{x}_{i}) = \frac{\lambda^0}{D} \sum_{m=1}^M \sum_{j\geq 0} w_{j,m} \, C_{j,m}.
\end{equation}
Also the simulation of the jump size $J\sim \frac{\hat{\Lambda}(t, x, z; \mathbf{x}_{i}) \nu^0}{\hat{L}(t, x; \mathbf{x}_{i})}$ is now straightforward, as we choose a pair of indices $(j,m)$ according to the probabilities $\frac{w_{j,m} C_{j,m} }{\sum_{j,m} w_{j,m}C_{j,m}}$, and generate the random variable $J$ from the corresponding multivariate Gaussian distribution $\mathcal{N}_{j,m}$. Notice that this methodology yields a reduction in computational complexity while maintaining the accuracy, as it allows to avoid numerical quadrature in \eqref{estimator_intensity} to sample efficiently the jump sizes, hence we use it in all the numerical simulations.

\begin{remark}
Whenever numerical evaluation of the estimators is required in the simulations, the density of the increments \eqref{density_explicit} is approximated by truncating the infinite sum to a finite number of jumps $n_J\in \mathbb{N}$, i.e.\ we consider
\begin{equation*} 
    \hat{f^0}_{t_{i+1}-t}(z) = \sum_{k\geq 0}^{n_J} \frac{(\lambda^0 (t_{i+1}-t))^k}{k!} \frac{e^{-\lambda^0 (t_{i+1}-t)}}{(2\pi)^\frac{d}{2} \prod_{p=1}^d \sqrt{\sigma_p^2 (t_{i+1}-t) +k\gamma_p^2} } \exp{\left(-\frac{1}{2} \sum_{p=1}^d \frac{(z_p - k c_{p})^2}{\sigma_p^2(t_{i+1}-t) +k\gamma_p^2}\right)}.
\end{equation*}
Then we can derive the truncated versions of \eqref{drift app} and \eqref{est_int_general_case}, which are the final expressions used in our numerical simulations.
\end{remark}

\subsection{Euler scheme with Gaussian jumps}
The first simulation scheme that we propose is a classical Euler scheme for the simulation of the optimal process \eqref{SDE_under P*}: more details for the Euler scheme for jump-diffusion processes can be found for example in \cite{glasserman2004convergence, protter1997euler}. Given the set of observation times $\{t_0=0, t_1,\ldots,t_N\}$, we fix a discretization $\pi$ inside each interval $[t_i,t_{i+1}]$, $i=0,\ldots,N-1$, given by $t_{i,k}=t_i + \frac{k}{N_\pi} \Delta t_i$, for $k=0,\ldots,N_\pi-1$, where $\Delta t_i = t_{i+1}-t_i$ and $N_\pi$ is the number of uniform time steps. From equation \eqref{SDE_under P*} we have
\begin{equation*}
    X_{t_{i,k+1}}-X_{t_{i,k}} = \int_{t_{i,k}}^{t_{i,k+1}} \alpha^*_t \d t +\int_{t_{i,k}}^{t_{i,k+1}} \sigma \d W_t + \int_{t_{i,k}}^{t_{i,k+1}} \int_{\R} z N(\d t,\d z)
\end{equation*}
for $i=0,\ldots,N-1$ and $k=0,\ldots,N_\pi-1$. We denote by $x_{i,k}$ the Euler approximation of $X_{t_{i,k}}$, and by $\mathbf{x}_{i}= (x_{1},\ldots, x_{i})$ the values attained at the grid times $(t_1,\ldots,t_i)$. Notice that the approximation of $X_{t_{i,k+1}}$ is performed using the information available at $t_{i,k}$, in particular the value $x_{i,k}$ of $X_{t_{i,k}}$. In our setting this means that both the drift and the jump intensity are evaluated at $(t_{i,k}, x_{i,k})$ and assumed to be constant in the whole time interval $(t_{i,k},t_{i,k+1})$. In particular, the drift $\alpha^*_t$ is approximated by $\hat{a}(t_{i,k}, x_{i,k}; \mathbf{x}_{i})$ given by the kernel estimator \eqref{drift app}, while for the jump component the instantaneous rate $\hat{L}(t, x; \mathbf{x}_{i})$ is approximated by $\hat{L}(t_{i,k}, x_{i,k}; \mathbf{x}_{i})$ via \eqref{est_int_general_case}. This results in a constant intensity over each Euler time step, allowing to simulate the number of jumps in $(t_{i,k},t_{i,k+1})$ as
\begin{equation*}
    \bar{N}_{i,k} \sim \text{Poi}((t_{i,k+1}-t_{i,k})\hat{L}(t_{i,k}, x_{i,k}; \mathbf{x}_{i})).
\end{equation*}
Finally, for the size of the jumps, we simulate i.i.d.\ random variables $(J_n)_{n=1,\ldots,\bar{N}_{i,k}}$ with distribution
\begin{equation} \label{simulation}
    J_n \sim \frac{\hat{\Lambda}(t_{i,k}, x_{i,k}, z; \mathbf{x}_{i}) \nu^0}{\hat{L}(t_{i,k}, x_{i,k}; \mathbf{x}_{i})}, 
\end{equation}
using the model \eqref{Gaussian_mixture_model} with Gaussian densities $\mathcal{N}_{j,m}= \mathcal{N}\left( \xi_{j,m}, \Sigma_{j} \right)$ defined by 
\begin{align*}
    \xi_{j,m} &= \left(\frac{(X^{(m)}_{t_{i+1}, p}-x_{i,k,p}-jc_{p})\gamma^2_p + c_{p}(\sigma_p^2 (t_{i+1}-t_{i,k}) +j\gamma^2_p)}{\sigma_p^2 (t_{i+1}-t_{i,k}) +(j+1)\gamma^2_p }\right)_{p=1, \ldots, d}\\
    \Sigma_{j} &= diag \left(\frac{\gamma^2_p(\sigma_p^2 (t_{i+1}-t_{i,k}) +j\gamma^2_p)}{\sigma_p^2 (t_{i+1}-t_{i,k}) +(j+1)\gamma^2_p}\right)_{p=1, \ldots, d}
\end{align*}
where $x_{i,k,p}$ denotes the $p$-th component of $x_{i,k}\in \R^d$. The complete Euler scheme of the process $X$ is then given by
\begin{equation*}
    x_{i,k+1} = x_{i,k} + (t_{i,k+1}-t_{i,k})\, \hat{a}(t_{i,k}, x_{i,k}; \mathbf{x}_{i})+ \sigma (W_{t_{i,k+1}}-W_{t_{i,k}}) + \sum_{n=1}^{\bar{N}_{i,k}} J_n
\end{equation*}
The pseudo-code of the complete generative model is presented in Algorithm \ref{alg2}. 

\begin{algorithm}[t]
\caption{SBJTS simulation with Euler scheme}
\label{alg2}
\begin{algorithmic}[1]
\State \textbf{Input:} data samples of time series $(X^{(m)}_{t_1},\ldots,X^{(m)}_{t_N})$, $m=1,\ldots,M$
\State \textbf{Initialization:} initial state $x_0=0$
\For{$i = 0, \ldots, N-1$}
    \State Initialize state $x_{i,0}=x_i$
    \For{$k = 0, \ldots, N_\pi - 1$}
        \State Compute $\hat{a}(t_{i,k}, x_{i,k}; \mathbf{x}_{i})$ by kernel estimator \eqref{drift app}
        \State Sample $\varepsilon_k \sim \mathcal{N}(0,1)$
        \State Compute $\hat{L}(t_{i,k}, x_{i,k}; \mathbf{x}_{i})$ by kernel estimator \eqref{est_int_general_case} 
        \State Generate $\bar{N}_{i,k}\sim \text{Poi}((t_{i,k+1}-t_{i,k})\hat{L}(t_{i,k}, x_{i,k}; \mathbf{x}_{i}))$ 
        \State Generate $J_n \sim \frac{\hat{\Lambda}(t_{i,k}, x_{i,k}, z; \mathbf{x}_{i}) \nu^0(\d z)}{\hat{L}(t_{i,k}, x_{i,k}; \mathbf{x}_{i})}$, $n=1,\ldots,\bar{N}_{i,k}$
        \State Compute
        \begin{equation*}
            x_{i,k+1} = x_{i,k} + \frac{\Delta t_i}{N_\pi} \hat{a}(t_{i,k}, x_{i,k}; \mathbf{x}_{i}) + \sigma\sqrt{\frac{\Delta t_i}{N_\pi}} \varepsilon_k +\sum_{n= 1}^{\bar{N}_{i,k}} J_n
        \end{equation*}
    \EndFor  
    \State Set $x_{i+1}=x_{i,N_\pi}$
\EndFor
\State \textbf{Return:} ($x_1,\ldots,x_N$)
\end{algorithmic}
\end{algorithm}

\subsection{Jump-adapted version of the Euler scheme}
We propose an alternative simulation method of the process \eqref{SDE_under P*} based on the so called jump-adapted Euler scheme for jump-diffusion SDEs (for details see \cite{bruti2007strong}). We consider the jump-adapted time discretization $\{\tau_0=0,\,\tau_1,\ldots,\tau_n=T\}$, which is constructed taking the equidistant time discretization $\pi$ given by $\{t_{i,k}\}_{i=0,\ldots,N-1, \, k=0\ldots,N_\pi-1}$ and adding the jump times given by the Poisson term of the SDE. In this way, we can simulate only the diffusion part of the process $X$ when there are no jumps, and add the jump increment only at the jump times. In particular, we use the following scheme:
\begin{align}
    \begin{cases}
        X_{\tau_{n+1}^-} = X_{\tau_{n}} + \alpha^*_{\tau_n} (\tau_{n+1}-\tau_n) + \sigma(W_{\tau_{n+1}}-W_{\tau_{n}}) \\
        X_{\tau_{n+1}} = X_{\tau_{n+1}^-} + J &\text{ if } \tau_{n+1} \text{ is a jump time} \\
        X_{\tau_{n+1}} = X_{\tau_{n+1}^-} &\text{ if } \tau_{n+1} \text{ is not a jump time}
    \end{cases}
\end{align}
where $J$ denotes the jump size at each jump time. Once again, the simulation of the jump times is performed during the simulation of the process as we work with time, state and path dependent intensity. Classical methods for simulating this type of jump-diffusion processes are based on the thinning algorithm (by Ogata \cite{ogata1981lewis, glasserman2004convergence}), which consists in taking an upper bound $M$ of the intensity function, simulating the jump times of a Poisson process with constant rate $M$, and then perform an acceptance-rejection step to select only the jump times of the time inhomogeneous Poisson process. Improvements of the Ogata's formulation in the case of a time and state dependent intensity allow to take a local bound which is an upper bound in space, but still a function of time (see \cite{daley2003introduction}). 
\begin{algorithm}[t]
\caption{SBJTS simulation with jump-adapted Euler scheme}
\label{alg3}
\begin{algorithmic}[1]
\State \textbf{Input:} data samples of time series $(X^{(m)}_{1},\ldots,X^{(m)}_{N})$, $m=1,\ldots,M$
\State \textbf{Initialization:} initial $x_0=0$, $t_{\text{current}}=0$
\State Generate the first jump time $t_{\text{jump}} \sim \text{Exp}( \hat{L}(t_{\text{current}}, x_0; \mathbf{x}_{0}))$
\For{$i = 0, \ldots, N-1$}
    \State Initialize state $x=x_i$
    \If{$t_{\text{jump}}=t_i$}
        \State $t_{\text{jump}} \sim t_{i} + \text{Exp} (\hat{L}(t_{i}, x_i; \mathbf{x}_{i}))$
    \EndIf
    \For{$k = 0, \ldots, N_\pi - 1$}
        \While{$t_{\text{jump}} \le t_{i,k+1}$ \textbf{and} $t_i < t_{\text{jump}} < t_{i+1}$}
            \State $\Delta t = t_{\text{jump}} - t_{\text{current}}$
            \State Compute $\hat{a}(t_{\text{current}}, x; \mathbf{x}_{i})$ by kernel estimator \eqref{drift app} and sample $\varepsilon \sim \mathcal{N}(0,1)$
            \State $x \leftarrow x + \Delta t\, \hat{a}(t_{\text{current}}, x; \mathbf{x}_{i}) + \sigma\sqrt{\Delta t} \,   \varepsilon $
            \State $t_{\text{current}} = t_{\text{jump}}$
            \State Generate $J \sim \hat{\Lambda}(t_{\text{current}}, x, z; \mathbf{x}_{i}) \nu^0(\d z)$
            \State Compute $\hat{L}(t_{\text{current}}, x; \mathbf{x}_{i})$ by kernel estimator \eqref{est_int_general_case} 
            \State Generate $t_{\text{jump}} \sim t_{\text{current}} + \text{Exp} (\hat{L}(t_{\text{current}}, x; \mathbf{x}_{i}))$
            \State $x \leftarrow x + J$
        \EndWhile
        \State $\Delta t = t_{i,k+1} - t_{\text{current}}$
        \State Compute $\hat{a}(t_{\text{current}}, x; \mathbf{x}_{i})$ by kernel estimator \eqref{drift app} and sample $\varepsilon \sim \mathcal{N}(0,1)$
        \State $x \leftarrow x + \Delta t\, \hat{a}(t_{\text{current}}, x; \mathbf{x}_{i}) + \sigma\sqrt{\Delta t} \,   \varepsilon $
        \State $t_{\text{current}} = t_{i,k+1}$
    \EndFor
    \State Set $x_{i+1}=x_{i,N_\pi}$
\EndFor
\State \textbf{Return:} $(x_{1}, \ldots, x_{N})$ 
\end{algorithmic}
\end{algorithm}

However, the main difficulty in our case lies in the fact that it is not straightforward to determine a uniform upper bound in both time and space for the function \eqref{est_int_general_case}. Moreover, working with state-dependent intensity, before performing the acceptance-rejection step we should simulate the continuous part of the process (which in particular involves the drift estimation) until each candidate jump time: to reduce at most the rejections, which generates additional computational time, we should find a strict upper bound of the intensity, which makes the problem still more complicate. For these reasons, we choose an approach that differs from the thinning method, and it is based on an approximation of the intensity function. Indeed, at each jump time $\tau_n$ we simulate the following jump time $\tau_{n+1}$ taking as time interval an exponential random variable with rate $\hat{L}$ evaluated at $\tau_n$. This means that we approximate the intensity function by a piecewise constant function, evaluating it only at the jump times. Consequently, the simulation of the jump component in the continuous-time generative process $X$ inevitably introduces a simulation error. However, rather than analyzing the error at the level of the SDE, we evaluate the performance of the generative model directly in terms of its ability to produce realistic synthetic time series, looking at specific metrics. In addition, this simulation strategy substantially reduces the computational cost of the generative model compared with the classical Euler scheme employed in Algorithm \ref{alg2}, making it particularly suitable for repeated generation tasks. We present the complete jump-adapted scheme in Algorithm \ref{alg3}.

This jump-adapted method has an average number of operations depending on the intensity of jumps as the number of points in the time discretization increases with the intensity. This means that it is not  efficient in the case of a large intensity, whereas the standard Euler scheme is not affected. But in the simulations that we consider in the following sections, we work with an average small number of jumps, making Algorithm \ref{alg3} much faster than Algorithm \ref{alg2}. Moreover, in this case we manage to always guarantee the predictability of the intensity function $(t,z)\mapsto \lambda^*_t(z)$, as the estimator $\hat{L}(t, x; \mathbf{x}_{i})$ is computed before updating the Euler scheme with the jump occurring at time $t$. Finally, we assume that no jumps occur exactly at the observation times $\{t_i\}_{i=1, \ldots, N}$ as jumps are generated only within each interval $(t_{i}, t_{i+1})$. This is consistent with the classical Euler scheme, where it is natural to treat jumps as occurring strictly between discretization points.

\section{Hyperparameter tuning}\label{section_calibration}
We outline the methodology for selecting appropriate values for the hyperparameters of our generative model. We describe a systematic approach that balances model performance and computational efficiency, ensuring that the chosen hyperparameters lead to accurate generative behavior. We present our calibration procedure in the one-dimensional setting. Specifically, we consider the optimal process \eqref{SDE_under P*} taking values in $\mathbb{R}$, with hyperparameters $\sigma>0$, $\nu^0=\mathcal{N}(c, \gamma^2)$, where $c\in \mathbb{R}$, $\gamma>0$, and $\lambda^0>0$. Although our analysis is carried out in dimension one, the same procedure will be applied componentwise in the multidimensional case (see Section \ref{section_real}).

\subsection{Selection of the kernel bandwidth and Markovianity order} \label{section_MBtest}
At each time, the kernel estimators presented above require to take into account the whole past values of the time series in order to compute the drift and intensity. In the presence of long time series, this results in an increasing number of factors $\mathbf{K}^i_{\mathbf{h}_i}(\mathbf{x}_{i}-\mathbf{X}^{(m)}_{t_i})$ equal to zero as we look at further dates. In order to solve this problem, we can force a shorter memory in the estimators, asking that the drift and intensity depend on $k<N$ past values. We follow the approach of \cite{alouadi2025robust}, where the authors propose a test to fix the window of past values dependence, also called Markovianity order. In addiction, this test allows to calibrate also the bandwidth $h$ of the kernel \eqref{kernel_def}: this parameter has a direct impact on the amount of data that are involved in the kernel estimation, hence it is important to perform a bias-variance tradeoff to achieve a good performance in the generation of time series. 

Following the cross-validation test proposed in \cite{alouadi2025robust}, fix a train set $X = (X^{(m)}_{t_1},\ldots, X^{(m)}_{t_N})_{m=1,\ldots,M}$ and a test set $Y = (Y^{(q)}_{t_1},\ldots, Y^{(q)}_{t_N})_{q=1,\ldots,Q}$ of real data. For each $q$, we take the time series $Y^{(q)}$ of the test set at the first $N-1$ dates $(Y^{(q)}_{t_1},\ldots, Y^{(q)}_{t_{N-1}})$ and we generate $L$ realizations of the last value $\hat{Y}^{(q),l}_{t_{N}}$ using the Schrödinger bridge generative model and the train set $X$. We do this for each couple $(h,k)$ varying in a grid of values, and we compute 
\begin{equation}\label{Markovianity_bandwidth_test}
    MSE_{h,k} = \frac{1}{Q} \sum_{q=1}^Q \left|\frac{1}{L}\sum_{l=1}^L \hat{Y}^{(q),l}_{t_{N}} - Y^{(q)}_{t_{N}} \right|^2
\end{equation}
In this way we can select the optimal $(h^*,k^*)$ which gives the minimum value of the previous mean square error function. In practice, at time $t_i$, with $i>k^*$, we will work with the kernel product $\bar{K}^{(m)}_i = \prod_{j=i-k^*+1}^i K_{h^*}(x_j-X^{(m)}_{t_{j}})$. In the following numerical simulations, we use this test when we generate synthetic time series with both the SBTS algorithm \cite{pham_generative} and our SBJTS algorithm.

\subsection{Selection of the parameters $\sigma$, $\lambda^0$, $c$ and $\gamma$} \label{hyper_selection}
We propose two possible tests that we can use to determine appropriate values of $\sigma$, $\lambda^0$, $c$ and $\gamma$. Each method is implemented after generating synthetic time series: for a given set of parameters, synthetic series are produced, a suitable loss function or metric is computed, and the results are compared with the corresponding statistics of the original data to identify the parameters that achieve the best performance. This procedure implies that we need to fix a grid of pre-specified parameter values over which the test is conducted. We proceed with the following two tests:

\begin{itemize}
    \item \textbf{Test on the distribution of the quadratic variation}: we consider the empirical distribution of the quadratic variation of the time series, computed summing the squared increments for each path, both for the initial time series (giving the distribution $p_\text{orig}$) and for the synthetic time series ($p_\text{synt}$). We aim to minimize the Wasserstein-2 distance of these two distributions numerically estimated:
    \begin{equation*} \label{qv_metric}
        \mathcal{W}_2(p_\text{orig}, p_\text{synt}) = \left( \inf_{\pi \in \Pi (p_\text{orig}, p_\text{synt})} \int |x-y|^2 d\pi (x,y) \right)^\frac{1}{2}
    \end{equation*}
    For each set of parameter values, we select the one that minimizes this loss function. 
    \item \textbf{Test on the discriminative score}: we consider the metric of the discriminative score to assess how distinguishable generated time series are from real ones. It works by training a recurrent neural network discriminator that learns to classify sequences as real or synthetic. Then the final score is defined as the absolute difference between 0.5 and the accuracy of classification. If the discriminator can easily separate the two classes, its accuracy is close to 1, and the discriminative score is close to 0.5. Conversely, if the generated data closely mimics the real distribution, the discriminator performs no better than random guessing ($\approx 50\%$ accuracy), yielding a score near zero. Thus, the test provides a quantitative way to evaluate the realism of generated time series. For each set of parameter values, we select the one that minimizes the score.
\end{itemize}
In the following section we explain how we can combine these two tests to choose the parameters. 

\subsection{Calibration procedure} \label{procedure_calib}
Our first step concerns the choice of the parameter $c$, the mean of the Gaussian distribution $\nu^0$. In all our numerical experiments, we set $c=0$, a choice motivated by the stationarity of the time series under consideration. Since we work exclusively with stationary data, alternative values of $c$ do not lead to any improvement in the performance of the generative model.

To determine suitable values for the other parameters, a key role is played by the quadratic variation of the optimal process \eqref{SDE_under P*} which we aim to simulate under the measure $\mathbb{P}^*$. Indeed, reproducing accurately the quadratic variation is essential in order to obtain time series that exhibit statistical behavior comparable to the original data. To this end, we look at the empirical variance of the increments of the initial data, and we try to generate time series with the same variance. Since under $\mathbb{P}^*$ we do not have an explicit expression for the variance of increments depending only on the hyperparameters, it is more practical to consider first the optimal process under $\mathbb{P}^0$ to obtain a preliminary selection of appropriate parameter values for $\sigma$, $\lambda^0$, and $\gamma$. In particular, we look at the following expression, for $i=0, \dots, N-1$, 
\begin{equation} \label{var_relation}
    \text{Var}_\mu(X_{t_{i+1}}-X_{t_{i}}) = \Delta t_i (\sigma^2 + \lambda^0\gamma^2)
\end{equation}
where on the left hand side we have the empirical variance under the measure $\mu$ of the increments of the time series data, and on the right hand side the theoretical variance of the increment of the optimal process under the reference probability $\mathbb{P}^0$ in the time interval $[t_i,t_{i+1})$ of length $\Delta t_i=t_{i+1}-t_i$. In practice, in our setting all time increments have a fixed length $\Delta t$, and on the left hand side we consider the average of the empirical variances computed over these intervals. In this way, we can select a reference measure $\mathbb{P}^0$ under which the increments of the optimal process have already a variance close to the one of the initial data. Moreover, we can easily find some upper bounds for $\sigma$ and $\gamma$, using \eqref{var_relation}, and excluding values which cause instability of the algorithm. For each set of parameters satisfying condition \eqref{var_relation}, we then generate synthetic time series using the SBJTS generative model. 

However, considering only the variance under $\mathbb{P}^0$ does not ensure that the variance of the increments of the time series generated under $\mathbb{P}^*$ is close to the one of the data. To address this limitation, we can calibrate better the parameters directly under the measure $\mathbb{P}^*$. For each couple $(\sigma, \gamma)$ fixed, we tune the parameter $\lambda^0$ to minimize the Wasserstein-2 distance between the empirical distribution of the quadratic variation of the simulated time series under $\mathbb{P}^*$ and the one of the initial data. This refinement results in a more effective calibration procedure for the generative model. Finally, among the candidate sets of parameters, we choose the set minimizing the discriminative score computed between the real and the synthetic time series. We describe the calibration procedure in detail below. \\

\noindent \textbf{Calibration procedure for the parameters $h$, $k$, $\sigma$, $\lambda^0$ and $\gamma$:}
\begin{itemize}
    \item Fix initial kernel bandwidth $h$ and Markovianity order $k$: choose initial values for $h$ and $k$, allowing for a reasonable bias-variance tradeoff in the kernel estimators. 
    \item Preliminary parameter selection under $\mathbb{P}^0$: using relation \eqref{var_relation}, determine suitable values of $(\sigma, \gamma)$ which do not cause instability of the algorithm. 
    \item Parameter selection under $\mathbb{P}^*$: for each couple ($\sigma$, $\gamma$) determined at the previous point, tune the value of $\lambda^0$ to minimize the Wasserstein-2 distance of the empirical distribution of the time series quadratic variation.
    \item Minimization of the discriminative score: for each candidate $(\sigma, \gamma, \lambda^0)$,  generate synthetic time series and select the triplet that minimizes the discriminative score.    
    \item Validate the initial parameters: run again the Markovianity-bandwidth test with the selected $(\sigma, \gamma, \lambda^0)$ to confirm that the initial choice of $h$ and $k$ remains appropriate.
    \item Final generation: use the finalized parameter set to perform the generation of synthetic time series. 
\end{itemize}

\begin{remark}
We emphasize that, in theory, the SBJTS generative model can operate for any choice of parameters, as no specific restrictions are imposed on their values. However, in practice, the model’s performance varies significantly across different parameter configurations. It is therefore important to devote attention to the preliminary calibration phase in order to identify suitable parameters. This aspect becomes particularly crucial when dealing with time series derived from real datasets rather than those simulated from parametric models. Nevertheless, this calibration step is not intended to determine the optimal set of parameters, as it is always performed as a grid search on some fixed candidates, but rather to identify reasonable configurations that ensure good performance. 
\end{remark}

\section{Numerical tests}\label{section_numerical_test}
\subsection{Test on simulated data: Merton model} \label{section_merton}
\begin{figure}[t]
    \centering
    \includegraphics[width=0.7\textwidth]{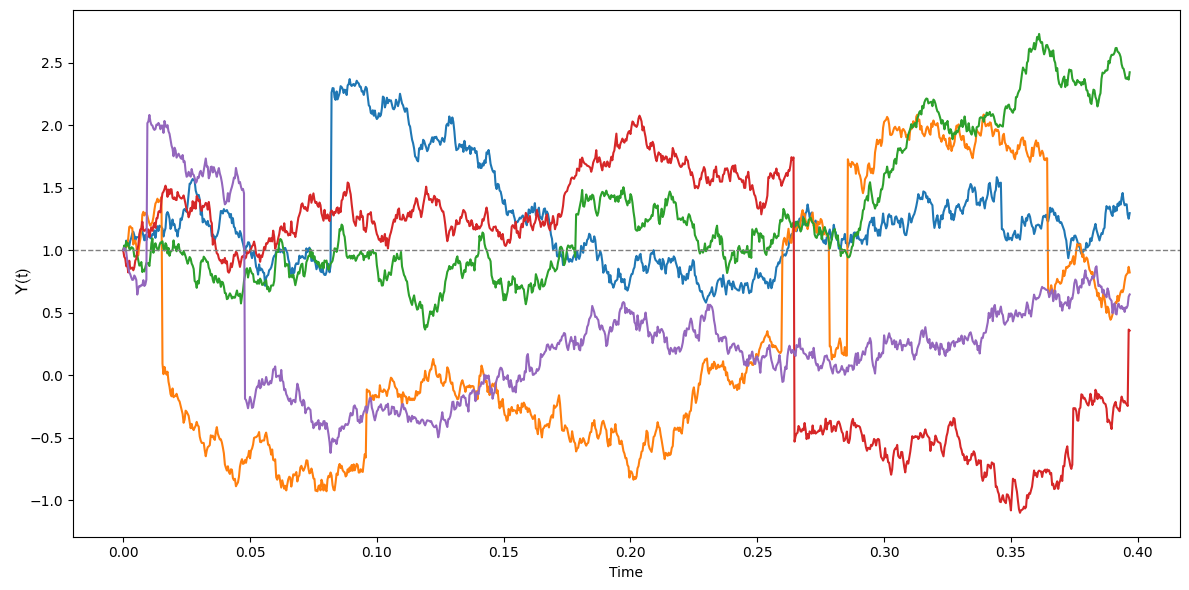}
    \caption{Simulated trajectories of Merton process \eqref{Merton_sde} with parameters $a=0$, $b=2$, $\lambda_\eta =10$, $m_J=0$ and $v_J =0.8$.}
    \label{Merton_sample_paths}
\end{figure}
As a preliminary illustrative example, we propose to test our generative model starting from time series obtained by the simulation of trajectories of a Merton jump–diffusion process in dimension 1 defined by
\begin{equation} \label{Merton_sde}
    Y_t = Y_0 + a\, t + b\, W_t + \sum_{i=0}^{\eta_t} J_i, \quad t\in [0,T],
\end{equation}
with $Y_0 = 1$, where $a$ and $b$ are fixed parameters, $W=(W_t)_{t\geq 0}$ is a standard Brownian motion, $\eta=(\eta_t)_{t\geq 0}$ a Poisson process with constant intensity $\lambda_\eta$. The jumps $(J_i)_{i\geq0}$ are assumed to be i.i.d.\ and driven by a Gaussian distribution $\mathcal{N}(m_J, v_J^2)$. Moreover, we impose the following condition: when the trajectory lies above $Y_0$ negative jumps are sampled, and when it lies below $Y_0$ positive jumps are sampled. This mechanism ensures that the resulting time series are inherently stationary, eliminating the need for subsequent transformations. We present the case where the volatility $b$ is moderate and the variance of the jump term permits the occurrence of relatively large jumps. Specifically, we consider a Merton model driven by \eqref{Merton_sde}, fixing the parameters $a=0$, $b=2$, $\lambda_\eta =10$, $m_J=0$ and $v_J =0.8$. Figure \ref{Merton_sample_paths} displays the sample paths obtained from five simulated realizations of the process. This configuration is intended to represent a realistic scenario in which the average jump frequency remains relatively low. We simulate $M=1000$ sample paths and then discretize the trajectories on the uniform grid $t_1, \ldots, t_N$ with $t_{i+1}-t_i=\frac{1}{252}$, for $i=0,\ldots,N-1$, $T=t_N$ and $N=100$; to run the simulation scheme, we discretize again each interval $[t_i, t_{i+1}]$ with $N_\pi=100$ steps. We test both the SBTS model (taken from \cite{pham_generative}) and our SBJTS model to compare the resulting metrics on 500 synthetic time series.

\subsubsection{For comparison: SBTS generation using the algorithm from [Hamdouche, Henry-Labordère and Pham, 2023]}
\begin{figure}[t]
    \centering
    \includegraphics[width=0.7\textwidth]{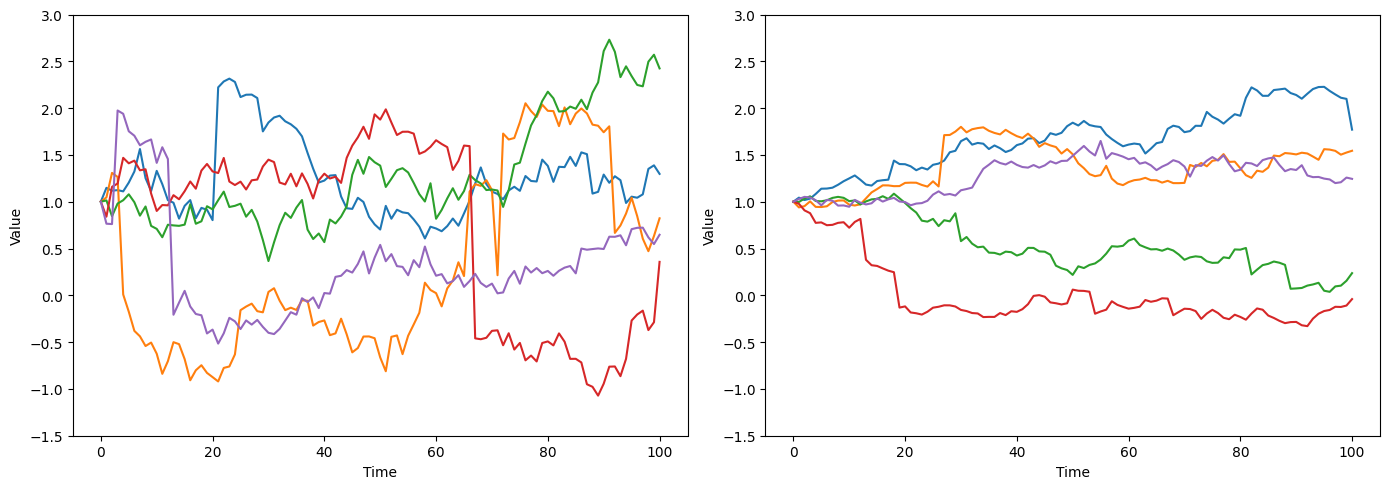}
    \caption{Generation of synthetic time series by SBTS model: sample paths of real time series (left) and sample paths of synthetic time series with $h=0.1$ and $k=1$ (right).}
    \label{cont_paths_3}
\end{figure}

\begin{figure}[t]
    \centering
    \includegraphics[width=0.5\textwidth]{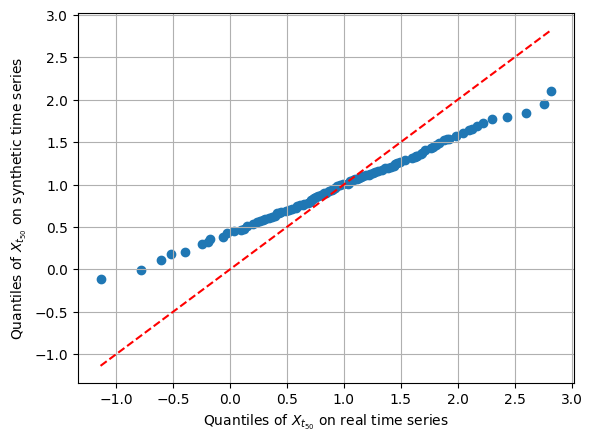}
    \caption{Generation of synthetic time series by SBTS model: QQ-plot between the quantiles of the empirical distributions of $X_{t_{50}}$ on real and synthetic time series.}
    \label{QQ_on_time_series}
\end{figure}

To evaluate the performance of the SBTS algorithm on Merton-generated time series, we adopt the procedure described in \cite{alouadi2025robust, pham_generative}. Starting from the original time series $(X_{t_i})_{i=0, \ldots, N}$, we compute the increments $R_{t_i} = X_{t_{i}} - X_{t_{i-1}}$ for $i=1, \ldots, N$, and apply the rescaling
\begin{equation}\label{normalization_std}
    \tilde{R}_{t_1:t_N} = R_{t_1:t_N} \times \frac{\sqrt{\Delta t}}{\sigma(R_{t_1:t_N})}
\end{equation}
where $\Delta t=\frac{1}{252}$ is the interval between two consecutive dates, $R_{t_1:t_N}=(R_{t_1}, \ldots, R_{t_N})$ denotes the entire series of increments, and $\sigma(R_{t_1:t_N})$ their empirical standard deviation. This normalization ensures that the rescaled time series have increments with empirical variance approximately equal to $\Delta t$. We consider the diffusion process defined by 
\begin{equation*}
    \d X_t = \alpha^*_t \d t + \d W_t, \quad t\in [0,T],
\end{equation*}
where the drift $\alpha^*_t$ is defined as in \eqref{def_of_drift} and \eqref{a general}, but in this case the expression of \eqref{funct_Fi} involves only the Gaussian densities given by the increments of the Brownian motion, without Poisson terms. To generate synthetic time series we use the Algorithm 1 of \cite{pham_generative}, then it is sufficient to multiply by $\frac{\sigma(R_{t_1:t_N})}{\sqrt{\Delta t}}$ to get to the initial scale. We remark that in the aforementioned works, the generative model is performed using the log-return series $(R_{t_i})_{i=1, \ldots, N}$, in order to enhance stationarity: in our case we take only the increments as we already assume the stationarity in the model that generates the initial time series. 

The only hyperparameters required are the kernel bandwidth $h$ and the Markovianity order $k$ of the time series. Using the test described in Section \ref{section_MBtest} combined with the SBTS model, we get $k=1$, consistent with the assumption of a Markovian model, and $h=0.1$. We stress that when the time series have relatively low variance, very small values of $h$ may lead to the generation of trajectories that nearly replicate the original data. To prevent this issue, we select a range of $h$ values for the bandwidth test that ensures an adequate number of observations within the kernel estimators.  

However, this generative model does not succeed in reproducing time series comparable to the original dataset. In Figure \ref{cont_paths_3} we display the first five simulated trajectories to have a qualitative representation of the behaviour of the synthetic time series: it is evident that working with a diffusion process, the generative model struggles to reproduce the large increments of the initial data attributed to the jump term. To illustrate this more clearly, we look at the distribution of $X_{t_i}$, for some $t_i$ fixed. Indeed, even if the constraint of the Schrödinger bridge problem is on the joint distribution of the time series, we can check whether the generative model captures also the marginal distributions. We compare the real and synthetic time series by examining the empirical quantiles of $X_{t_{50}}$ and representing them in a QQ-plot in Figure \ref{QQ_on_time_series}. The deviation of the points from the reference diagonal clearly indicates that the SBTS model fails to reproduce the distribution at this time: the empirical quantiles of the synthetic data are systematically different from those of the real data, revealing a significant mismatch between the two distributions.

\subsubsection{SBJTS generation}
\begin{figure}[t]
    \centering
    \includegraphics[width=0.5\textwidth]{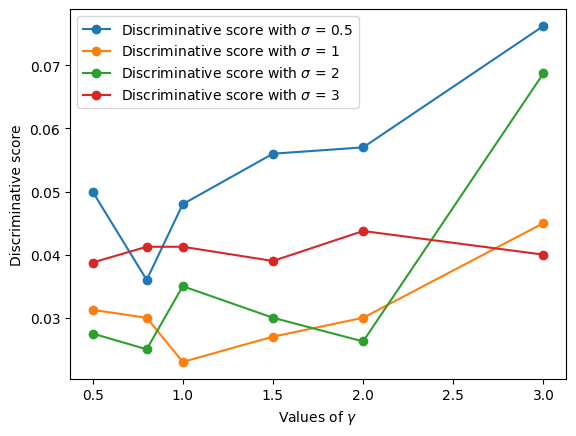}
    \caption{Discriminative scores tested on real time series and synthetic time series generated with different values of the hyperparameters $(\sigma, \gamma, \lambda^0)$.}
    \label{calibration_Merton}
\end{figure}
To assess the performance of the SBJTS model, we directly use the time series sampled from the Merton process without any rescaling. Instead, we calibrate the hyperparameters of the generative model to reproduce the key statistical characteristics of the original data. In this case, we consider the optimal process \eqref{SDE_under P*} in dimension 1, hence $\sigma\in \R^+$, and $\nu^0=\mathcal{N}(c, \gamma^2)$ is a Gaussian distribution with $c\in\R$, $\gamma\in\R^+$. We generate synthetic time series and present numerical experiments evaluating the algorithm’s performance. \\

\begin{figure}[t]
    \centering
    \begin{subfigure}[b]{0.3\textwidth}
        \includegraphics[width=\textwidth]{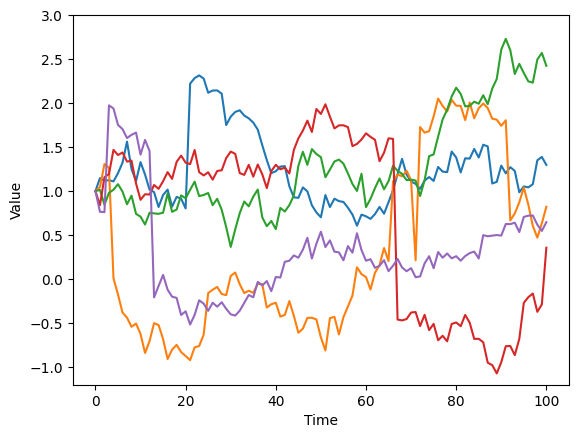}
    \end{subfigure}
    \hfill
    \begin{subfigure}[b]{0.3\textwidth}
        \includegraphics[width=\textwidth]{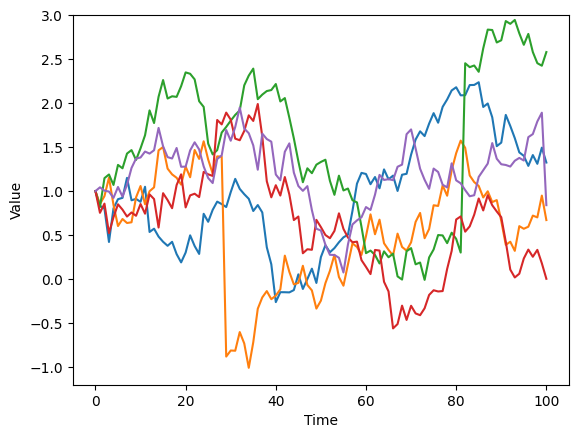}
    \end{subfigure}
    \hfill
    \begin{subfigure}[b]{0.3\textwidth}
        \includegraphics[width=\textwidth]{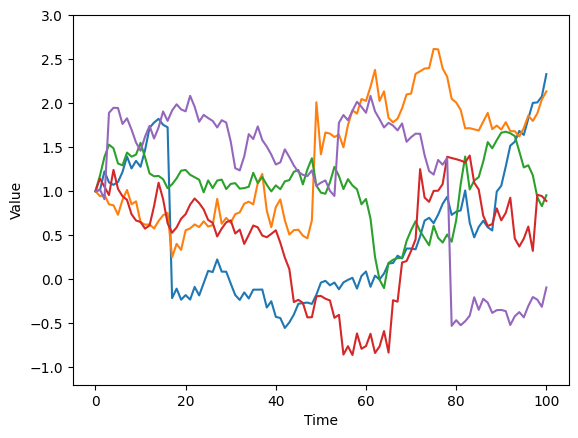}
    \end{subfigure}
    \caption{Generation of synthetic time series by SBJTS model: sample paths of real time series (left), sample paths of synthetic time series with the choice $h=0.3$, $k=1$, $\sigma=2$, $\lambda^0=5$, $c=0$, and $\gamma =0.8$ (case (\textit{i}) - middle), sample paths of synthetic time series with the choice $h=0.3$, $k=1$, $\sigma=1$, $\lambda^0=70$, $c=0$, and $\gamma =1$ (case (\textit{ii}) - right).}
    \label{paths_data_and_synth}
\end{figure}
\begin{figure}[h]
    \centering
    \begin{subfigure}[b]{0.49\textwidth}
        \includegraphics[width=\textwidth]{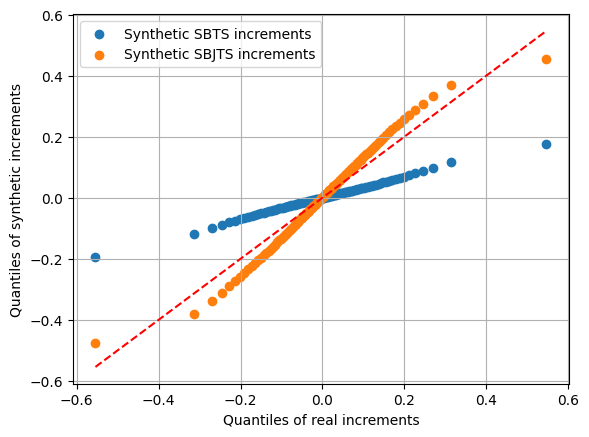}
    \end{subfigure}
    \hfill
    \begin{subfigure}[b]{0.49\textwidth}
        \includegraphics[width=\textwidth]{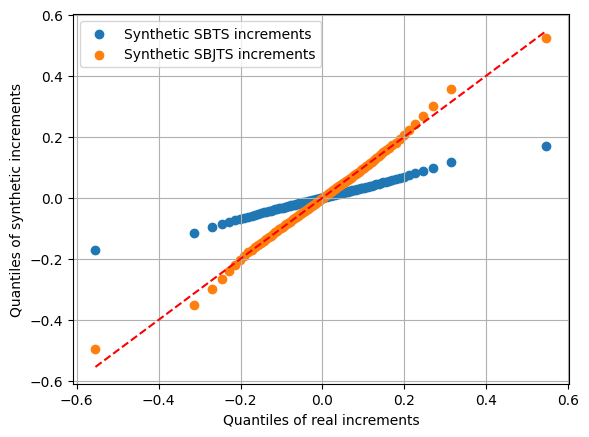}
    \end{subfigure}
    \caption{Generation of synthetic time series by SBJTS model: QQ-plot between the quantiles of the increments of initial time series and synthetic time series generated by SBTS and SBJTS in case (\textit{i}) (left) and in case (\textit{ii}) (right).}
    \label{QQ_ex1}
\end{figure}

\noindent \textbf{Calibration of the hyperparameters.}
Following the calibration procedure outlined in Section \ref{procedure_calib}, we aim to determine suitable values for the parameters $h$, $k$, $\sigma$, $\lambda^0$, $c$ and $\gamma$. In this example using Merton time series, we begin by setting $k=1$, consistent with the Markovian nature of the Merton model, and choose $h\in [0.1, 0.3]$. We fix $c=0$. Empirically, values of $\sigma<0.5$ or $\sigma>3$ fail to reproduce the empirical quadratic variation of the increments and result in synthetic time series with high discriminative scores. Accordingly, we test $\sigma \in \{  0.5, 1, 2, 3\}$. For each value of $\sigma$, using relation \eqref{var_relation}, we identify suitable values of $\gamma$ (and a corresponding $\lambda^0$); moreover, we find that $\gamma<1$ leads to numerical instability, while $\gamma>1.5$ produces synthetic series with excessively large jumps, causing the kernel estimators to vanish. Therefore, we test $\gamma\in \{0.5, 0.8, 1, 1.5, 2, 3\}$. For each pair $(\sigma, \gamma)$, we tune the value $\lambda^0$ via the test on the distribution of the quadratic variation (see Section \ref{hyper_selection}): in this way, we identify parameter triples $(\sigma, \gamma, \lambda^0)$ for which the synthetic time series have quadratic variation distribution close to the same distribution on real time series. Finally, for each choice of $(\sigma, \gamma, \lambda^0)$ among these candidates, we run the generative model and compute the discriminative score comparing the synthetic data to the real data. Figure \ref{calibration_Merton} reports the scores obtained for the different tested parameters in a calibration test: we observe that the parameter combinations $(\sigma, \gamma, \lambda^0)$ yielding the lowest scores are those close to the true Merton parameters $a, b, \lambda_\eta, m_J, v_J$ used to generate the initial time series. This behaviour is expected, and we indeed use this toy example to validate the calibration procedure. However, the dynamics \eqref{SDE_under P*} induced by our generative model differ from the original Merton process: in particular, the Schrödinger-bridge construction introduces a generally non-zero drift, typically adding extra variability, and a data-dependent jump term that does not coincide with the one in \eqref{Merton_sde}. Consequently, the fact that we can find a slight discrepancy between the score minimizers and the true parameter values is not contradictory. In this simulation, the lowest discriminative score is obtained testing the generative model with the following choice of hyperparameters: $h=0.3$, $k=1$, $\sigma=1$, $\lambda^0=70$, $c=0$, $\gamma =1$.\\

\noindent \textbf{Numerical results.} We present now the results of the SBJTS generative model. We fix $h=0.3$, $k=1$ and the two following set of hyperparameters:
\begin{enumerate}[label=(\textit{\roman*})]
    \item $\sigma=2$, $\lambda^0=5$, $c=0$, and $\gamma =0.8$: this choice corresponds exactly to the same volatility, mean and variance of jump sizes of the initial Merton process defined in \eqref{Merton_sde}, with the intensity $\lambda^0$ fixed to match the quadratic variation distribution;
    \item $\sigma=1$, $\lambda^0=70$, $c=0$, and $\gamma =1$: this choice is set to match the distribution of the quadratic variation of the initial time series, and it corresponds to the lowest value of discriminative score in Figure \ref{calibration_Merton} ($\approx 0.023$). 
\end{enumerate}
As a first visual check of the trajectories, in Figure \ref{paths_data_and_synth} we can see that the generated time series by SBJTS model have a general behaviour that is closed to the initial dataset and, in contrast with the simulation presented in Figure \ref{cont_paths_3} for the SBTS case, now it is evident the presence of jumps. The QQ-plots in Figure \ref{QQ_ex1} is built starting from the computation of the quantiles on all the increments of the time series: we can see that the distribution of the increments generated by SBJTS model has quantiles that closely match those of the original data, while the same quantiles on synthetic series produced by SBTS model exhibit significant discrepancy. 
\begin{figure}[t]
    \centering
    \includegraphics[width=0.8\textwidth]{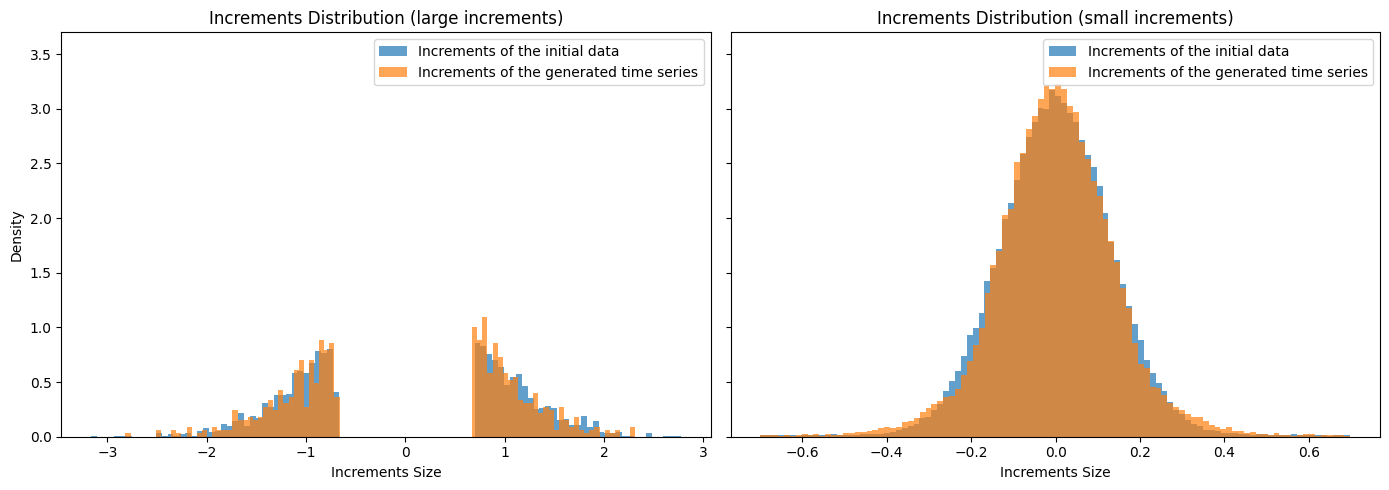}
    \caption{Comparison between the empirical distribution of the increments of real data and synthetic data generated via SBJTS model in case (\textit{ii}): distribution of increments larger than 0.7 (left) and smaller than 0.7 (right).}
    \label{dist_increments}
\end{figure}
\begin{figure}[t]
    \centering
    \includegraphics[width=0.45\textwidth]{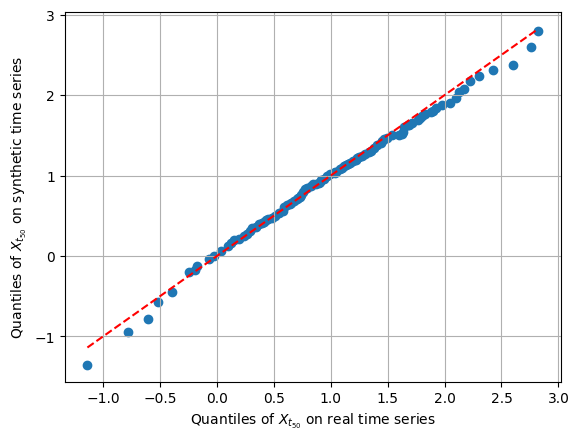}
    \caption{QQ-plot between the quantiles of the empirical distributions of $X_{t_{50}}$ on real and synthetic time series generated via SBJTS model in case (\textit{ii}).}
    \label{QQ_of_timeseries}
\end{figure}

To provide additional metrics in case (\textit{ii}), we plot in Figure \ref{dist_increments} the overall distribution of increments, where we fix a threshold of 0.7 to distinguish between small and large increments in both the original and generated time series. In this way, we can see that our generative model not only accurately captures the distribution of the small increments, but also the tails of the distribution, i.e.\ the large increments. This is particularly important as the jump term primarily affect this part of the distribution: at discrete sampling times, small jumps are essentially indistinguishable from diffusion-driven increments, whereas large jumps are much easier to identify. Examining the tail behaviour shows that the jump component is indeed well reproduced by our generative model. In Figure \ref{QQ_of_timeseries}, we focus on the marginal distribution of $X_{t_{50}}$, exactly as in Figure \ref{QQ_on_time_series}: once again we can notice the improvement in the QQ-plot with respect to the SBTS model. In Appendix \ref{appendix_numerical_test} we present additional tests on the Merton time series.

\subsection{Pure jump case} \label{section_OU}
In this section, we provide an example in dimension one dropping the assumption on $\sigma$ to be non-degenerate, as we consider $\sigma=0$. In this way we recover the pure jump case, as we consider the solution $\mathbb{P}^*$ of the SBJTS problem to be the law of the process $X$ with dynamics
\begin{equation*}
    \begin{cases}
        \d X_t = \int_{\R} z N(\d t,\d z), \quad t\in [0,T],\\
        X_0 = 1
    \end{cases}
\end{equation*}
where $N$ has intensity measure $\lambda^*_t(z)\nu^0(\d z)\d t$. In this case, the expressions \eqref{a general} and \eqref{lambda general} can be derived working directly with the measures $\mu^0_\mathcal{T}(\d x_1,\ldots,\d x_N)$ and $\mu(\d x_1,\ldots,\d x_N)$, using the decomposition 
\begin{equation} 
    \mu^0_\mathcal{T}(\d x_1,\ldots,\d x_N) = \prod_{i=0}^{N-1} \mu^0_{i+1|i}(\d x_{i+1}),
\end{equation}
where $\mu^0_{i+1|i}(\d x_{i+1})$ denotes the conditional law of $X_{t_{i+1}}$ under $\mathbb{P}^0$, given the value of $X_{t_i}$. Making the choice $\nu^0 = \mathcal{N}(c, \gamma)$ with $c\in\R$ and $\gamma\in\R^+$, we have
\begin{align*}
    \mu^0_{i+1|i}(\d z) = e^{-\lambda^0 (t_{i+1}-t_{i})} \delta_0(\d z) + \sum_{k\geq 1} \frac{e^{-\lambda^0 (t_{i+1}-t_{i})}(\lambda^0 (t_{i+1}-t_{i}))^k}{k! \sqrt{2\pi k \gamma^2}} \exp{\left(-\frac{|z-k c|^2}{2 k \gamma^2}\right)} \d z.
\end{align*}
Hence, if we consider 
\begin{equation*}
    f^0_{t_{i+1}-t}(x_{i+1}-x) = 
    \begin{cases}
        e^{-\lambda^0 (t_{i+1}-t)}, & \text{if } x_{i+1}-x=0, \\
        \sum_{k\geq 1} \frac{(\lambda^0 (t_{i+1}-t))^k}{k!} \frac{e^{-\lambda^0 (t_{i+1}-t)}}{\sqrt{2\pi k \gamma^2}} \exp{\left(-\frac{|x_{i+1}-x-kc|^2}{2k \gamma^2}\right)}, &\text{otherwise},
    \end{cases}
\end{equation*}
for $t\in [t_i,t_{i+1})$, $i=0,\ldots,N_1$, then we get the expressions of the function $F_i(t,x_i,x,x_{i+1})$ defined in \eqref{funct_Fi} and the estimators of the drift and jump intensity. Moreover, also in this case we can use at the Gaussian mixture model to simulate jump sizes: when we have a jump $J$ at time $t \in (t_i, t_{i+1})$ and $X_{t^-}=x$, we sample the amplitude of jump from the following distribution
\begin{equation*}
    J \sim \sum_{m=1}^M \sum_{j\geq 0} w_{j,m} \, \mathcal{N}_{j,m}
\end{equation*}
where
\begin{align*}
    w_{j,m} = 
    \frac{ \frac{(\lambda^0 (t_{i+1}-t))^j}{j!} \frac{e^{-\lambda^0 (t_{i+1}-t)}}{\sqrt{2\pi (j+1)\gamma^2}} \exp{\left( -\frac{|X^{(m)}_{t_{i+1}}-x-(j+1)c|^2}{2(j+1)\gamma^2}  \right)}}
    { e^{-\lambda^0 (t_{i+1}-t_{i})} \mathbbm{1}_{\{x_i = X^{(m)}_{t_{i+1}} \}} + \sum_{k\geq1} \frac{(\lambda^0 (t_{i+1}-t_{i}))^k}{k!} \frac{e^{-\lambda^0 (t_{i+1}-t_{i})}}{\sqrt{2\pi k \gamma^2}} \exp{\left(-\frac{|X^{(m)}_{t_{i+1}}-x_{i}-kc|^2}{2 k \gamma^2}\right)}} K(\mathbf{x}_{i}-\mathbf{X}^{(m)}_{t_i})  
\end{align*}
and 
\begin{align*}
    \mathcal{N}_{0,m} &= \delta_{\{0\}}(X^{(m)}_{t_{i+1}}-x),  \quad \\
    \mathcal{N}_{j,m} &= \mathcal{N}\left(\frac{X^{(m)}_{t_{i+1}}-x}{j+1}, \frac{j \gamma^2}{j+1}\right), \quad \text{if } j\geq 1.   
\end{align*}
In this pure jump case, we test the SBJTS model starting from simulated time series built as discretization of the sample paths of Ornstein–Uhlenbeck process. We consider the process in dimension $d=1$ defined by
\begin{equation}\label{sde_ou}
    \begin{cases}
        \d Y_t = \theta(a - Y_t)\,\d t + b\, \d W_t, \quad t\in [0,T], \quad \\
        Y_0 = 1
    \end{cases}
\end{equation}
and we fix $\theta=100$, $a=1$, $b=10$, $T=\frac{100}{252}$ in order to have strongly mean reverting trajectories also when we look at time series sampled at low frequency. We want to show that working with our state-dependent jump process properly calibrated we can generate synthetic time series sampled from the trajectories of the pure jump process $X$ with metrics close to the ones of the initial time series. With respect to the previous example where there was also the diffusion term (and in particular the drift) which contributes to replicate the correct behaviour of the time series, here we focus on the jump term. \\

\begin{figure}[t]
    \centering
    \includegraphics[width=0.7\textwidth]{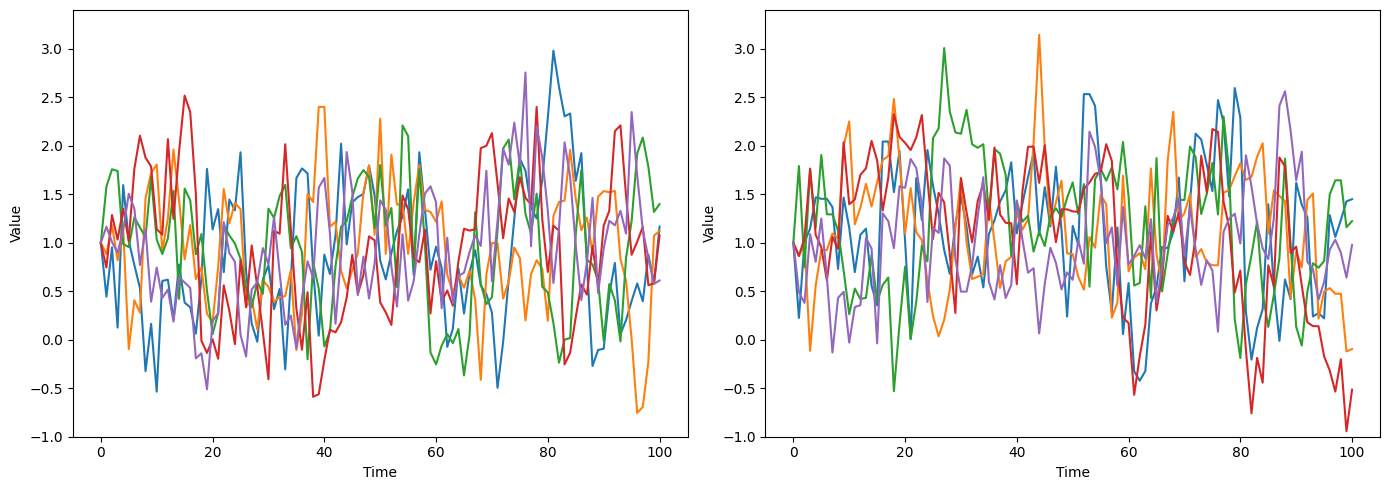}
    \caption{Paths of initial time series sampled by the trajectories of the OU process (left), and generated time series by SBJTS model in the case of pure jump generative process $X$ (right).}
    \label{purely_jump_1}
\end{figure}
\begin{figure}[t]
    \centering
    \begin{subfigure}[b]{0.47\textwidth}
        \includegraphics[width=\textwidth]{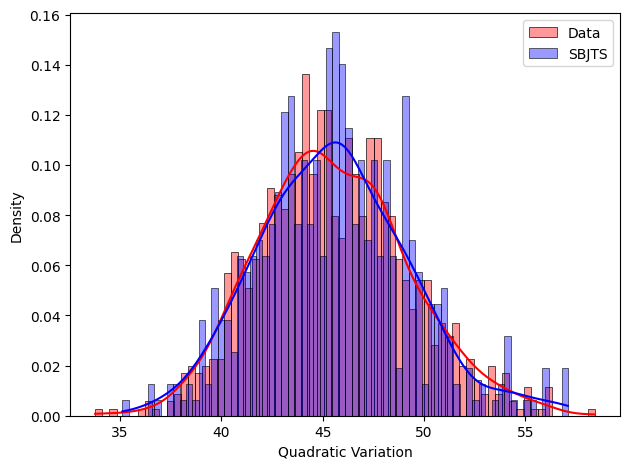}
    \end{subfigure}
    \hfill
    \begin{subfigure}[b]{0.47\textwidth}
        \includegraphics[width=\textwidth]{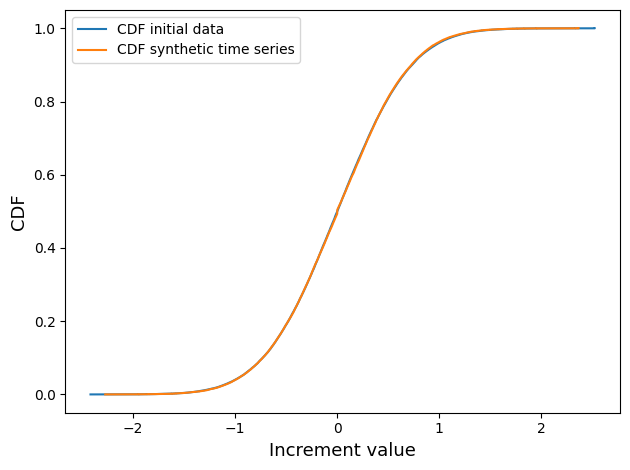}
    \end{subfigure}
    \caption{Comparison between the quadratic variation distribution of the initial time series and the generated time series by SBJTS model in the pure jump case (left). Comparison between the empirical CDF computed on real increments and on generated increments by SBJTS model in the pure jump case (right).}
    \label{cdf1}
\end{figure}

\noindent \textbf{Numerical results.} 
Starting from the trajectories of the process \eqref{sde_ou}, we fix $\Delta t_i = \frac{1}{252}$ to get initial time series of length $N=100$. We then generate 500 synthetic time series, choosing as hyperparameters of the generative model $K=1$, $h=0.3$, $\lambda^0=1000$, $c=0$ and $\gamma =0.1$: on average, the purely jump process has trajectories with 400 jumps. See Figure \ref{purely_jump_1} to compare the trajectories of 5 realizations of real and synthetic time series. In Figure \ref{cdf1} we first plot the empirical distribution of the quadratic variation computed along the real and synthetic time series to show that the chosen hyperparameters yield closely matching distributions. To evaluate the quality of the generation, we look at the empirical cumulative distribution function of the increments for both the real and synthetic time series: we can notice a good performance of our generative model, as the two functions perfectly overlap. In Appendix \ref{appendix_ou}, we present the same test taking a finer time discretization: in this way we look at the time series obtained from the same trajectories but working at high frequency.

\subsection{Test on real data: Stock and Energy dataset} \label{section_real}
In this section, we apply the SBJTS generative model to the case of real datasets which store historical observations of multiple variables related to specific domains, hence we adopt the multidimensional formulation of the problem. From the historical observations we construct time series samples by applying a sliding-window procedure with overlapping windows. In this way we construct time series $X_{t_0 : t_N}=(X_{t_0}, \ldots,X_{t_N})$, and then we use the following two normalizations:
\begin{itemize}
    \item Base one normalization: $\left(\frac{X_{t_0 : t_N, p}}{X_{t_0,p}}\right)$, for ${p=1,\ldots,d}$;
    \item Standard normalization: $\left(\frac{X_{t_1 : t_N,p} -m(X_{t_1 : t_N,p})}{v(X_{t_1 : t_N,p})}\right)$, for ${p=1,\ldots,d}$, where $m(X_{t_1 : t_N, p})$ denotes the empirical mean of the $p$-th component of the entire time series and $v(X_{t_1 : t_N,p})$ the empirical standard deviation. 
\end{itemize}
Hence we proceed with the calibration of the hyperparameters again through a grid–search procedure. To reduce the computational complexity, we first tune the parameters component-wise in dimension 1, following the methodology described in Section \ref{section_calibration}. Once suitable parameter candidates have been identified, we can perform a full $d$-dimensional calibration, where synthetic multivariate time series are generated for each candidate combination, and the set achieving the lowest discriminative score is selected. The parameters $h$, $k$ and $\lambda^0$ are finally fixed globally through $d$-dimensional generation. After obtaining new time series, we invert the standard normalization to return to the base-one scale, in order to compare them to the original data. 

We test our generative model on two distinct datasets, and we compare our results with several state-of-the-art generative models for time series that are evaluated in the literature in terms of both generation quality and predictive performance. In particular, we consider models based on GAN architectures \cite{yoon2019time}, flow-matching approaches \cite{hu2024fm}, optimal transport methods \cite{xu2020cot}, and diffusion-based generative models employing score-matching techniques \cite{lim2025tsgm, naiman2024utilizing}. For consistency with the existing benchmark, we compare real and synthetic time series of fixed length $N=24$. 
\begin{itemize}
    \item \textbf{Stock dataset:} daily historical Google stocks data from 2004 to 2019, including six features: high, low, opening, closing, adjusted closing prices, and volume, hence $d=6$. We choose the following parameters: $\sigma_p = 0.7$ for $p=1, \ldots, d-1$, $\sigma_d=1$, $\lambda^0=0.2$, $c_p = 0$ for $p=1, \ldots, d$, $\gamma_p = 0.1$ for $p=1, \ldots, d-1$, $\gamma_d = 0.6$. We generate 2000 synthetic time series.
    \item \textbf{Energy dataset:} electricity consumption and environmental conditions of a low-energy residential building over time, taken from \cite{candanedo2017data}. Its main target variable is the total energy consumed by household appliances. The dataset includes a variety of environmental and indoor features like the temperature readings from multiple rooms, relative humidity levels in the same areas, and outdoor measurements such as temperature, relative humidity, wind speed, atmospheric pressure. The data were collected every 10 minutes for 137 days. This dataset has 28 features, hence $d=28$, and we select the parameters $\sigma_1= \sigma_2 =1.2$, $\sigma_p =1$ for $p=3, \ldots, d$, $\lambda^0=0.5$, $c_p = 0$ for $p=1, \ldots, d$, $\gamma_1=\gamma_2=0.3$ and $\gamma_p = 0.1$, $p=3, \ldots, d$. We generate 1000 synthetic time series.
\end{itemize}
Once that we perform the generation of synthetic time series, we measure the quality of the synthetic data using the following two metrics:
\begin{itemize}
    \item \textbf{Discriminative score:} see Section \ref{hyper_selection}.
    \item \textbf{Predictive score:} we aim to determine whether the synthetic data preserve the predictive structure of the real sequences. The idea is that, if the generative model is well trained, the synthetic samples should capture the temporal dependencies and conditional distributions present in the original data. Concretely, we train on the synthetic dataset a recurrent neural network, which learns to predict the $d$-th component from time steps $t_2$ to $t_N$, given the first $d-1$ components from $t_1$ to $t_{N-1}$, minimizing the mean absolute error. Then the trained predictor is tested on the real dataset, and the global mean absolute error gives the final score. 
\end{itemize}
Following the benchmark literature, we compute the scores using post-hoc recurrent neural networks (GRU type) with batch size 128 and hidden dimension $\max(\frac{d}{2},1)$, using a single layer for the predictive score and two layers for the discriminative score. To determine a value of the scores, we compare the same number of synthetic time series and randomly selected real samples; for each dataset, the final score is the mean of the results of 10 runs of the test, the error is the standard deviation. In Table \ref{tab_pred_scores_stocks} we report the discriminative and predictive scores obtained using the SBJTS generative model on the Stock dataset compared to the scores of the other generative models taken from the aforementioned papers, while in Table \ref{tab_pred_scores_energy} we report the same metrics on the Energy dataset. In both cases, we underline in bold the scores obtained with our generative model and the lowest scores among the state-of-the-art benchmark. We can conclude that the SBJTS has a very competitive performance on the two datasets. Indeed, in the case of the Stock dataset, the scores are really close to the ones obtained for the SBTS model, which has already a very good performance compared to the other generative models. On the other hand, in the case of the Energy dataset we manage to definitely improve the discriminative score obtained with the SBTS generative model, without deteriorating the predictive score. This is motivated by the fact that the dataset presents some components whose behaviour can be better reproduced adding the jump term in the generative process, leading to higher accuracy on the generation of synthetic time series using our generative model, and therefore a low discriminative score. 

\begin{table}[h]
\centering
\begin{tabular}{l|c|c} 
\hline
\textbf{Model} & \textbf{Disc. score} & \textbf{Pred. score} \\
\hline
TSGM-VP        & 0.022 $\pm$ 0.005 & 0.037 $\pm$ 0.000 \\
TSGM-subVP     & 0.021 $\pm$ 0.008 & 0.037 $\pm$ 0.000\\
ImagenTime    & 0.037 $\pm$ 0.006 & 0.036 $\pm$ 0.000 \\
T-Forcing    & 0.226 $\pm$ 0.035 & 0.038 $\pm$ 0.001 \\
P-Forcing     & 0.257 $\pm$ 0.026 & 0.043 $\pm$ 0.001 \\
TimeGAN      & 0.102 $\pm$ 0.031  & 0.038 $\pm$ 0.001 \\
RCGAN        & 0.196 $\pm$ 0.027  & 0.040 $\pm$ 0.001 \\
C-RNN-GAN    & 0.399 $\pm$ 0.028 & 0.038 $\pm$ 0.000 \\
TimeVAE     &  0.175 $\pm$ 0.031 & 0.042 $\pm$ 0.002 \\
WaveGAN     & 0.217 $\pm$ 0.022  & 0.041 $\pm$ 0.001 \\
COT-GAN     & 0.285 $\pm$ 0.030  & 0.044 $\pm$ 0.000 \\
FM-TS       & 0.019 $\pm$ 0.013  & 0.036 $\pm$ 0.000 \\
SBTS        & \textbf{0.010 $\pm$ 0.008}  & \textbf{0.017 $\pm$ 0.000} \\
\textbf{SBJTS}  & \textbf{0.036 $\pm$ 0.031} & \textbf{0.018 $\pm$ 0.005}  \\
\hline
\end{tabular}
\caption{Discriminative score and predictive score for the Stock dataset: comparison of the performance of the SBJTS model with other generative models. In bold: the lowest discriminative and predictive score among the benchmark and the scores for the SBJTS model.}
\label{tab_pred_scores_stocks}
\end{table}

\begin{table}[h]
\centering
\begin{tabular}{l|c|c} 
\hline
\textbf{Model} & \textbf{Disc. score} & \textbf{Pred. score} \\
\hline
TSGM-VP        & 0.221 $\pm$ 0.025 & 0.257 $\pm$ 0.000 \\
TSGM-subVP     & 0.198 $\pm$ 0.025 & 0.252 $\pm$ 0.000\\
ImagenTime    & \textbf{0.040 $\pm$ 0.004} & 0.250 $\pm$ 0.000 \\
T-Forcing    & 0.483 $\pm$ 0.004 & 0.315 $\pm$ 0.005 \\
P-Forcing     & 0.412 $\pm$ 0.006 & 0.303 $\pm$ 0.006 \\
TimeGAN      & 0.236 $\pm$ 0.012  & 0.273 $\pm$ 0.004      \\
RCGAN        & 0.336 $\pm$ 0.017  & 0.292 $\pm$ 0.005 \\
C-RNN-GAN    & 0.499 $\pm$ 0.001 & 0.483 $\pm$ 0.005 \\
TimeVAE     &  0.498 $\pm$ 0.006 & 0.268 $\pm$ 0.004 \\
WaveGAN     & 0.363 $\pm$ 0.012  & 0.307 $\pm$ 0.007 \\
COT-GAN     & 0.498 $\pm$ 0.000  & 0.260 $\pm$ 0.000 \\
FM-TS       & 0.053 $\pm$ 0.010  & 0.250 $\pm$ 0.000 \\
SBTS        & 0.356 $\pm$ 0.020  &  \textbf{0.072 $\pm$ 0.001} \\
\textbf{SBJTS}   & \textbf{0.065 $\pm$ 0.031}  & \textbf{0.080 $\pm$ 0.011} \\
\hline
\end{tabular}
\caption{Discriminative score and predictive score for the Energy dataset: comparison of the performance of the SBJTS model with other generative models. In bold: the lowest discriminative and predictive score among the benchmark and the scores for the SBJTS model.}
\label{tab_pred_scores_energy}
\end{table}

\section{Conclusion}

In this work, we addressed the problem of generating realistic time series by extending Schr\"odinger bridge–based generative models to stochastic dynamics with jumps. Our main contribution is  the generalization of the continuous-time framework of Hamdouche et al.\ \cite{pham_generative} to probability measures defined on the space of càdlàg paths, allowing the reference process to be a jump–diffusion. Under a finite Kullback-Leibler divergence assumption, we characterized the solution of the resulting Schr\"odinger bridge problem, and explicitly identified the dynamics of the optimal controlled process which, sampled at the observation dates, generates synthetic time series with joint distribution matching a prescribed target distribution. We further proposed a systematic hyperparameter calibration strategy that improves the numerical efficiency and robustness of the generative model. 

The proposed framework nevertheless presents some limitations. In particular, the choice of the reference measure remains restrictive, and while the model learns jump sizes from the data, it does not explicitly aim to reproduce the empirical distribution of jump times. Addressing these aspects — by allowing for more flexible reference dynamics or by directly modeling jump-time distributions — constitutes a natural direction for future research. Despite these limitations, the extension of Schr\"odinger bridge generative models to jump–diffusion dynamics leads to consistent improvements in numerical experiments on both simulated and real-world datasets. In particular, the proposed approach shows competitive performance compared with state-of-the-art generative models, notably in capturing heavy-tailed behavior, abrupt variations, and regime changes in realistic time series.

\begin{appendices}
\section{Predictability and Poisson measures} \label{app_A}
We recall some notions on predictability and Poisson random measures that are used throughout the analysis of Schrödinger bridges for jump–diffusions: for additional details, see \cite{applebaum2009levy, jacod2013limit}. Let $(\Omega,\mathcal{F},(\mathcal{F}_t)_{t\ge 0},\mathbb{P})$ be a filtered probability space. The predictable $\sigma$-algebra $\mathcal{P}_{\mathrm{pred}}$ is defined as the smallest $\sigma$-algebra on $\Omega\times[0,\infty)$ that makes measurable all the left-continuous, $(\mathcal{F}_t)$-adapted processes; equivalently, it is generated by sets of the form $(s,t]\times A$ with $0\le s<t$ and $A\in\mathcal{F}_s$. A Poisson random measure on a measurable space $(E,\mathcal{E})$ with intensity measure $m$ is a mapping $N:\Omega\times\mathcal{B}([0,\infty))\times\mathcal{E}\to\mathbb{N}\cup \{0\}$, denoted by $N(\mathrm{d}t,\mathrm{d}z)$, such that for each $A\in\mathcal{E}$ the process $t\mapsto N([0,t]\times A)$ is a Poisson process with mean $m([0,t]\times A)$ and with independent increments across disjoint sets. In the stochastic-intensity setting, the intensity measure is assumed to take the form $\lambda_t(z)\nu(\d z)\d t$, where $\lambda_t(z)$ is a non-negative, $\mathcal{P}_{\mathrm{pred}}$-measurable process. Hence
\begin{equation*}
    \mathbb{E}[N((s,t]\times A) |\mathcal{F}_s] = \mathbb{E}\left[\int_{(s,t]\times A} \lambda_u(z)\nu(\d z)\d u \Bigg|\mathcal{F}_s\right].
\end{equation*}
We can introduce the compensated measure of $N$, denoted by $\tilde{N}(\d t,\d z)$, as 
\begin{equation*}
    \tilde{N}(\d t,\d z) = N(\d t,\d z) - \lambda_t(z)\nu(\d z)\d t,
\end{equation*}
which defines a local martingale on each measurable set. Finally, for any predictable function $F:[0,T]\times E\times\Omega\to\mathbb{R}$ satisfying
\begin{equation*}
    \mathbb{E} \left[ \int_{[0,T]\times E} |F(t,z)|^2 \lambda_s(z)\nu(\d z)\d s \right] < \infty,
\end{equation*}
the stochastic integral 
\begin{equation*}
    M_t = \int_{(0,t]\times E} F(s,z) \tilde{N}(\d s, \d z)
\end{equation*}
is well defined and yields a square-integrable martingale.

\section{Additional numerical tests} \label{app_B}
\subsection{Merton time series: distribution of jumps and simulated trajectories}\label{appendix_numerical_test}
\begin{figure}[t]
    \centering
    \begin{subfigure}[b]{0.49\textwidth}
        \includegraphics[width=\textwidth]{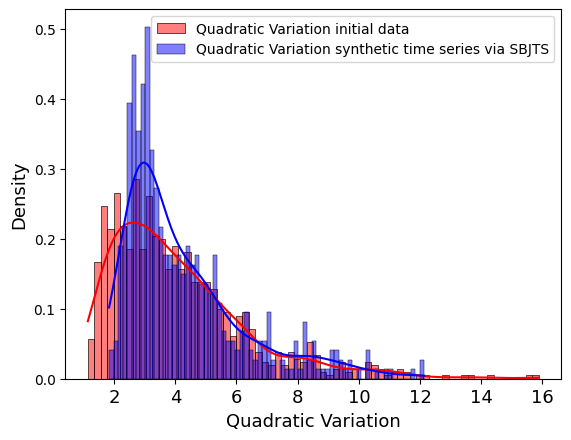}
    \end{subfigure}
    \hfill
    \begin{subfigure}[b]{0.49\textwidth}
        \includegraphics[width=\textwidth]{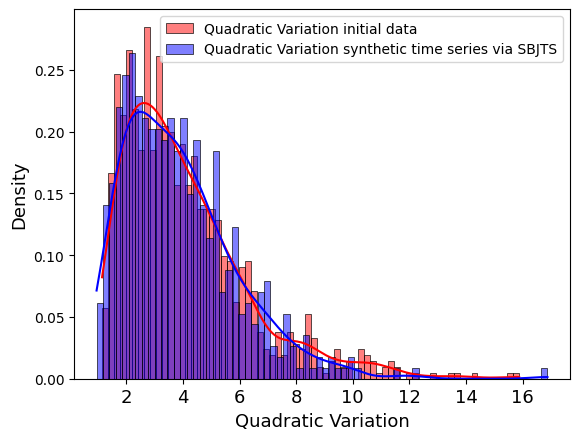}
    \end{subfigure}
    \caption{Comparison between the quadratic variation distribution of the initial time series and the generated time series in case (\textit{i}) (left) and in case (\textit{ii}) (right).}
    \label{quad_var_app}
\end{figure}
\begin{figure}[t]
    \centering
    \begin{subfigure}[b]{0.49\textwidth}
        \includegraphics[width=\textwidth]{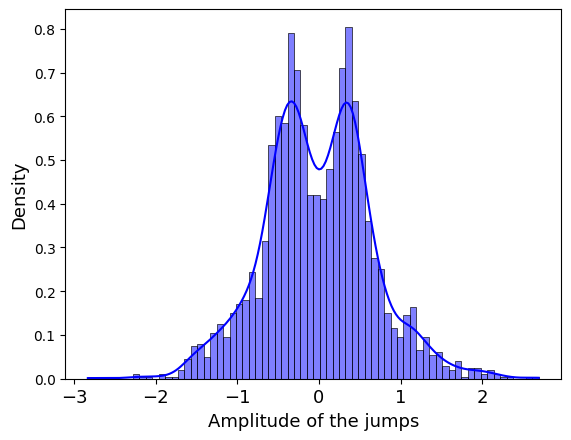}
    \end{subfigure}
    \hfill
    \begin{subfigure}[b]{0.49\textwidth}
        \includegraphics[width=\textwidth]{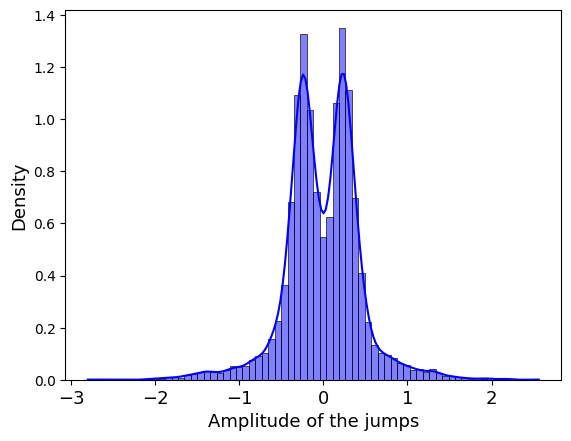}
    \end{subfigure}
    \caption{Empirical distribution of the generated jump sizes in the case (\textit{i}) (left) and in the case (\textit{ii}) (right) estimated over 500 synthetic time series.}
    \label{jump_ex1}
\end{figure}
We report here additional plots obtained in the generation of time series starting from the Merton model. In Figure \ref{quad_var_app} we can see that the two choices of parameters presented in Section \ref{section_merton} (denoted always with case (\textit{i}) and (\textit{ii})) allow to generate time series whose quadratic variation distribution matches the quadratic variation distribution of the initial data. Figure \ref{jump_ex1} displays the empirical distribution of the generated jump sizes in case (\textit{i}) and (\textit{ii}). In both cases the distribution of jump sizes is bimodal. This suggests that the generative model captures small increments of the original data primarily with the diffusion component, while the larger jumps are captured by the jump mechanism.

\subsection{Pure jump generative model: high-frequency case} \label{appendix_ou}
Starting from the same trajectories of the Ornstein-Uhlenbeck process \eqref{sde_ou} sampled for the numerical test in Section \ref{section_OU}, we change the time discretization of the interval $[0,T]$ decreasing the time step $\Delta t_i=t_{i+1}-t_i$ to look at the time series at different frequencies, taking $\Delta t_i = \frac{1}{252} \cdot\frac{1}{10}$, and so we work with time series with length $N=1000$. Moreover, we fix $K=1$, $h=0.3$ and, for the jump term, we fix the following three sets of hyperparameters:
\begin{enumerate}[label=(\textit{\alph*})]
    \item $\lambda^0=\frac{1}{\Delta t_i} = 25200$, $c=0$ and $\gamma =0.1$;
    \item $\lambda^0=\frac{1}{2 \Delta t_i} = 12600$, $c=0$ and $\gamma =0.06$;
    \item $\lambda^0=2000$, $c=0$ and $\gamma =0.03$.
\end{enumerate}
In all these three cases, the hyperparameters are fixed to generate time series that have an empirical distribution of the quadratic variation close to the one of the initial data. The fact that we decrease both the intensity $\lambda^0$ and the standard deviation $\gamma$ is given by the fact that under the measure $\mathbb{P}^*$ the expression of the intensity of jumps depends on both the parameters. Taking a smaller value of $\gamma$ produce an increase of the intensity of jump, so we calibrate the value $\lambda^0$ to compensate this relation. Moving from (\textit{a}) to (\textit{c}) the average number of jumps in each simulated trajectory increases: around 1400-1500 in case (\textit{a}), 1500-1600 in case (\textit{b}), 1800-2000 in case (\textit{c}). In Figure \ref{purely_jump_2} we plot 5 time series from the real dataset and from the synthetic one. 
\begin{figure}[h]
    \centering
    \includegraphics[width=0.7\textwidth]{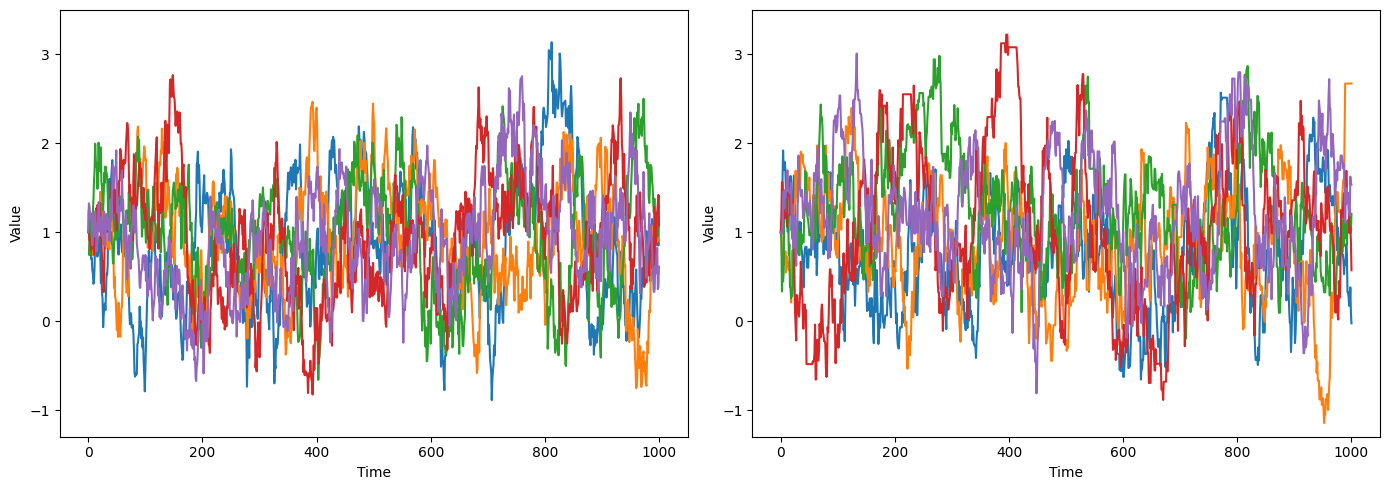}
    \caption{Sample paths of initial time series (left) and generated time series by SBJTS (right) using the pure jump optimal process with the choice $\lambda^0=12600$, $c=0$ and $\sigma =0.06$.}
    \label{purely_jump_2}
\end{figure}

\begin{figure}[htb]
    \centering
    \begin{subfigure}[b]{0.3\textwidth}
        \includegraphics[width=\textwidth]{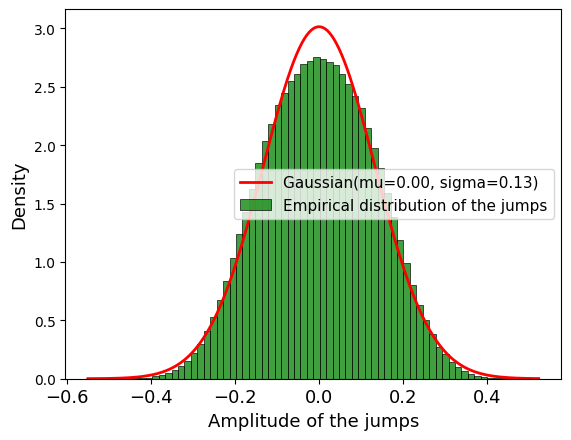}
    \end{subfigure}
    \hfill
    \begin{subfigure}[b]{0.3\textwidth}
        \includegraphics[width=\textwidth]{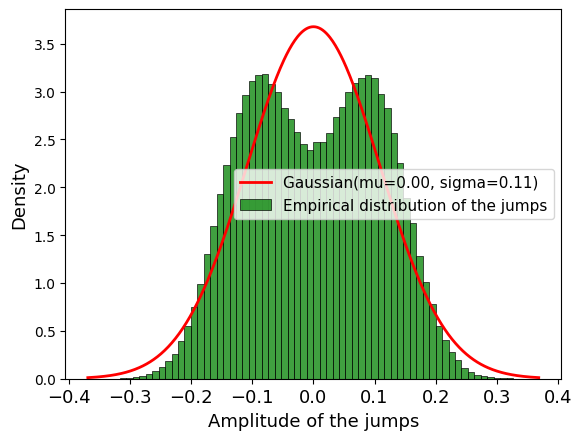}
    \end{subfigure}
    \hfill
    \begin{subfigure}[b]{0.3\textwidth}
        \includegraphics[width=\textwidth]{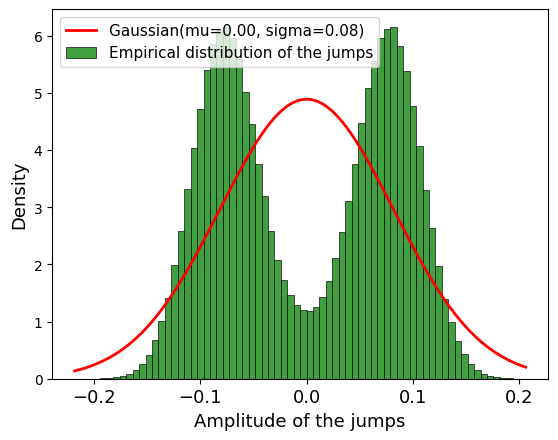}
    \end{subfigure}
    \caption{Distribution of the amplitude of the jumps generated in the trajectories of the pure jump optimal process and comparison with the Gaussian distribution with the empirical mean and empirical standard deviation of the jump distribution in case (\textit{a}) (left), (\textit{b}) (middle) and (\textit{c}) (right). }
    \label{metric_dist_jump_purely}
\end{figure}

\begin{figure}[htb]
    \centering
    \begin{subfigure}[b]{0.3\textwidth}
        \includegraphics[width=\textwidth]{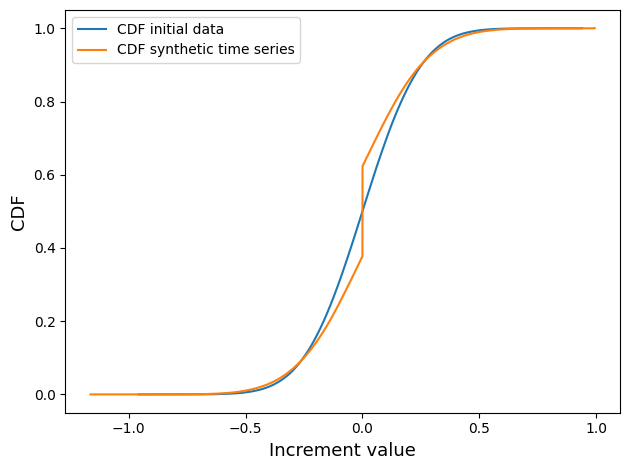}
    \end{subfigure}
    \hfill
    \begin{subfigure}[b]{0.3\textwidth}
        \includegraphics[width=\textwidth]{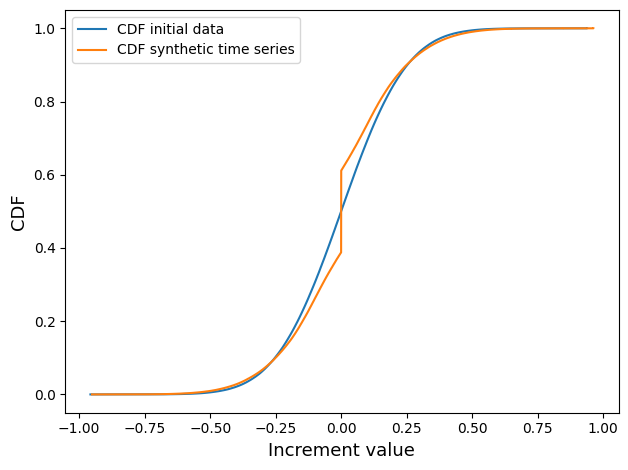}
    \end{subfigure}
    \hfill
    \begin{subfigure}[b]{0.3\textwidth}
        \includegraphics[width=\textwidth]{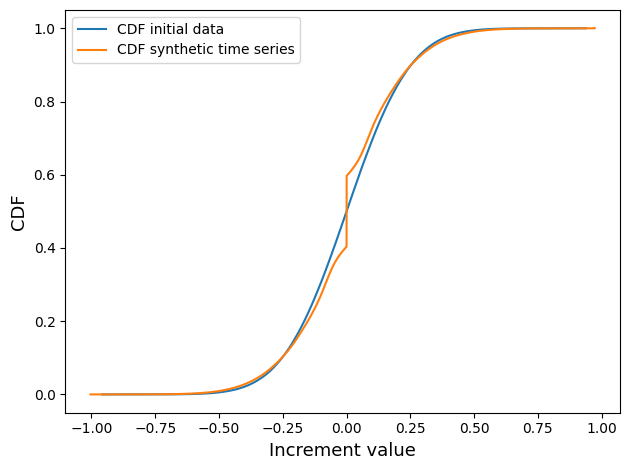}
    \end{subfigure}
    \caption{Empirical CDF of the increments in case (\textit{a}) (left), (\textit{b}) (middle) and (\textit{c}) (right).}
    \label{jump_purely_CDF}
\end{figure}

\begin{figure}[htb]
    \centering
    \begin{subfigure}[b]{0.4\textwidth}
        \includegraphics[width=\textwidth]{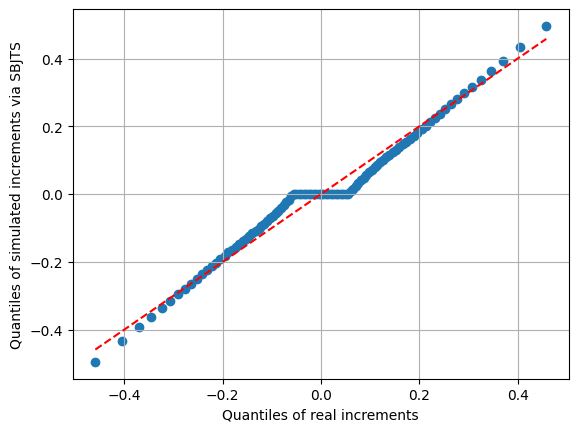}
    \end{subfigure}
    \caption{QQ-plot between the quantiles of the empirical distributions of the real and synthetic increments in case (\textit{c}).}
    \label{test2}
\end{figure}
In all the cases (\textit{a}), (\textit{b}) and (\textit{c}) we have similar results in terms of QQ-plot and discriminative score. However, as we decrease the value of $\gamma$, some differences emerge in the empirical distribution of the generated jump amplitudes along the trajectories. In particular, the distribution shifts from being approximately centered and Gaussian-shaped to becoming bimodal (see Figure \ref{metric_dist_jump_purely}). This behaviour can be interpreted as follows: when the number of jumps increases, the model tends to generate many small jumps, which cluster around either the positive or the negative mode depending on the current state of the process, in order to replicate the mean-reverting structure of the original data. Conversely, when fewer jumps are present, the model compensates by generating jump amplitudes that are more concentrated around zero, to balance the effect of the larger jumps. In Figure \ref{jump_purely_CDF}, we plot the empirical CDF of the increments. In all three cases, the CDF obtained from the synthetic data exhibits a discontinuity at zero, reflecting the positive probability of increments being exactly zero due to time intervals in which no jumps are sampled. This is motivated by the limitations of reproducing the arbitrarily small and frequent fluctuations of a diffusion by a pure finite-activity jump process, unless taking very small amplitude of jumps, which creates instability on our generative model. The same behaviour is represented in Figure \ref{test2} via the QQ-plot.

\subsection{Real datasets: qualitative and quantitative results}
In addition to the discriminative and predictive score reported in Section \ref{section_real}, in Table \ref{quant_stock} we report the 5\% and 95\% empirical quantiles of the empirical distribution of $X_{t_{12}}$ computed on real values and on synthetic values generated by the SBJTS model. Overall, the synthetic data closely match the empirical quantiles across all components, indicating that the model accurately captures the marginal distribution. In particular, also in the case of the sixth component, which presents large fluctuations, the model still reproduces the extreme quantiles with high accuracy. 

Figure \ref{correl_matrix} represents the empirical correlation matrix between the first 5 components of the Energy dataset. As discussed in Section \ref{section_real}, the generative process $X$ under the reference measure $\mathbb{P}^0$ assumes independent Brownian and jump components across dimensions. Nevertheless, through the learned drift and jump intensity under the optimal measure $\mathbb{P}^*$, the generated time series successfully reproduce the cross-component correlations observed in the original $d$-dimensional dataset.

Finally, for illustrative purposes, in Figure \ref{Trajectories_Energy} we present the trajectories of 20 real and synthetic time series from the Energy dataset for variables with indices 1, 7, and 27, to provide an intuitive check of the generation accuracy more than a rigorous validation test. These plots highlight how the SBJTS generative model is able to reproduce the qualitative behavior of the different variables: it adapts to time series with stronger mean reversion, it captures varying levels of volatility or the presence of peaks, that can be attributed to a jump dynamics. 
\begin{table}[h]
\centering
\begin{tabular}{c|c|c|c|c} 
\hline
\textbf{Component} & \textbf{5\% Data} & \textbf{5\% SBJTS} & \textbf{95\% Data} & \textbf{95\% SBJTS} \\
\hline
1 & 0.923 &	0.923 &	1.097 &	1.087 \\
2 & 0.926 &	0.928 &	1.096 &	1.086 \\
3 & 0.921 &	0.922 &	1.098 &	1.087 \\
4 & 0.922 &	0.924 &	1.095 &	1.089 \\
5 & 0.922 &	0.924 &	1.095 &	1.088 \\
6 & 0.413 &	0.402 &	2.474 &	2.484 \\
\hline
\end{tabular}
\caption{Quantiles of Stock dataset: comparison of the empirical quantiles of $X_{t_{12}}$ computed on real data and synthetic data. }
\label{quant_stock}
\end{table}

\begin{figure}[t]
    \centering
    \includegraphics[width=0.8\textwidth]{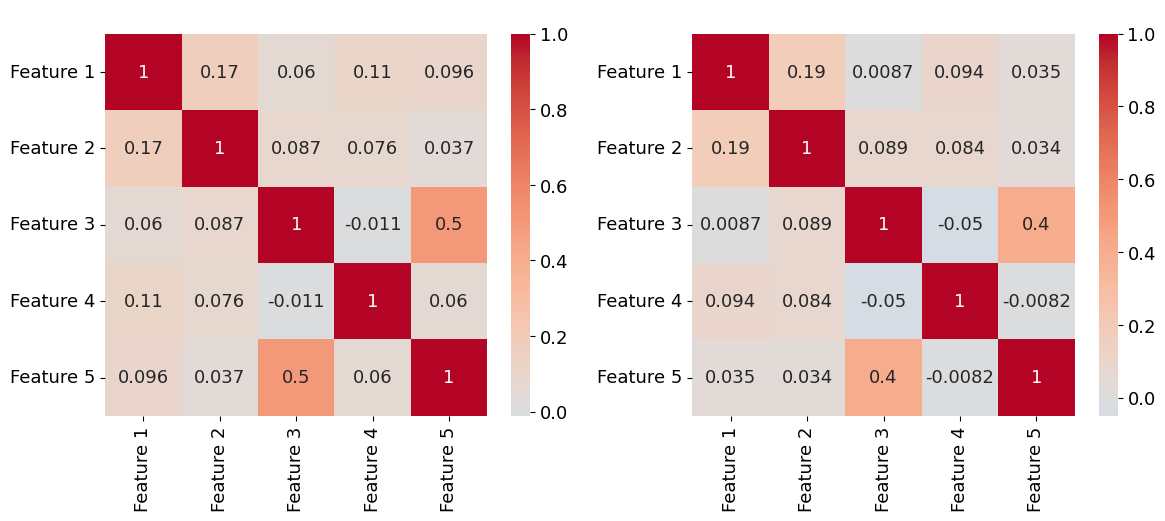}
    \caption{Correlation matrix among 5 components of the real dataset (left) and synthetic dataset (right) in the case of the Energy dataset.}
    \label{correl_matrix}
\end{figure}
\begin{figure}[t]
    \centering
    \begin{subfigure}{0.8\textwidth}
        \centering
        \includegraphics[width=0.8\textwidth]{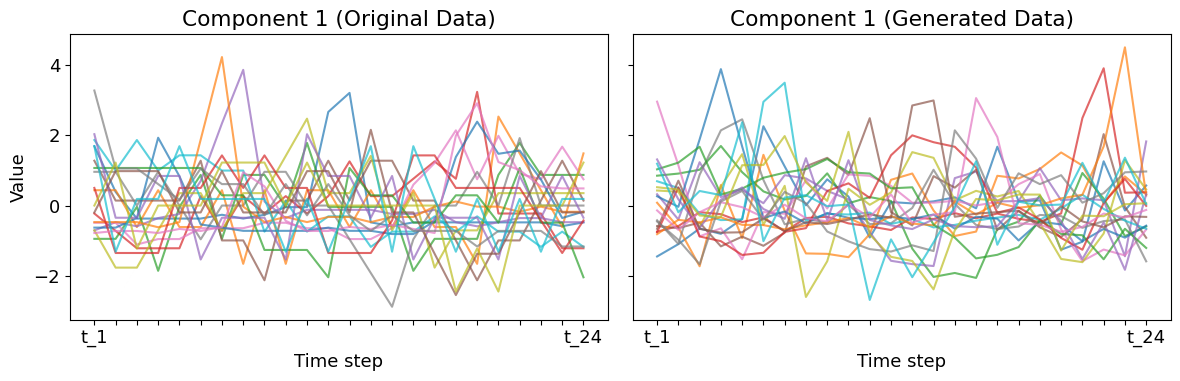}
        \caption{\footnotesize{Trajectories of 20 real time series (left) and synthetic time series (right) for the variable 1 (appliances energy consumption) of the dataset Energy.}}
        \label{comp_1}
    \end{subfigure}
    \par\smallskip
    \begin{subfigure}{0.8\textwidth}
        \centering
        \includegraphics[width=0.8\textwidth]{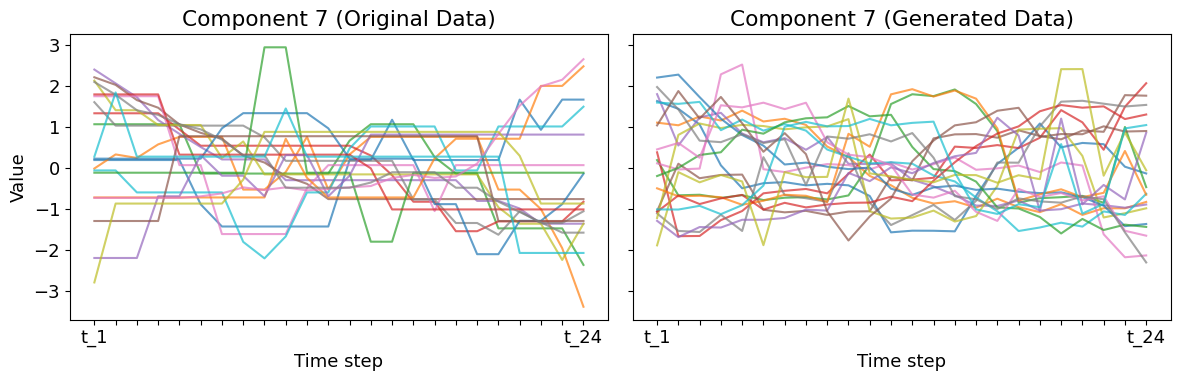}
        \caption{\footnotesize{Trajectories of 20 real time series (left) and synthetic time series (right) for the variable 7 (temperature in laundry room area) of the dataset Energy.}}
        \label{comp_4}
    \end{subfigure}
    \par\smallskip
    \begin{subfigure}{0.8\textwidth}
        \centering
        \includegraphics[width=0.8\textwidth]{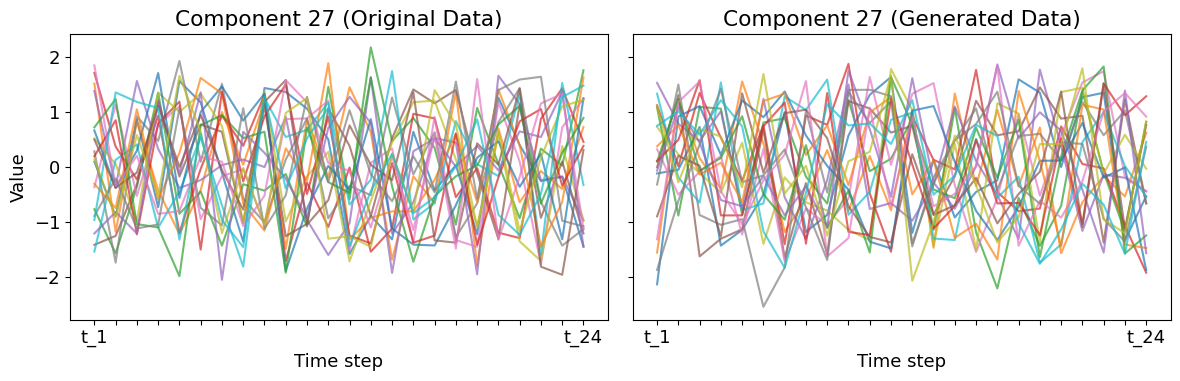}
        \caption{\footnotesize{Trajectories of 20 real time series (left) and synthetic time series (right) for the variable 27 (random variable 1) of the dataset Energy.}}
        \label{comp_27}   
    \end{subfigure}
    \caption{Trajectories of time series from the Energy dataset.}
    \label{Trajectories_Energy}
\end{figure}   
\end{appendices}

\clearpage
\bibliographystyle{abbrv}
\bibliography{bib}

\end{document}